\newtheorem{theorem}{Theorem}
\newtheorem{lemma}{Lemma}
\newtheorem{corollary}{Corollary}
\newtheorem{remark}{Remark}%
\newtheorem{conjecture}{Conjecture}%
\crefname{equation}{Eq.}{Eq.}
\crefname{theorem}{Theorem}{Theorem}
\crefname{corollary}{Corollary}{Corollary}
\crefname{lemma}{Lemma}{Lemma}
\crefname{remark}{Remark}{Remark}
\crefname{definition}{Definition}{Definition}
\crefname{example}{Example}{Example}
\crefname{section}{Section}{Section}
\begin{document}
	
	\begin{center}
		
		{\large  \bf Infinite families of MDS and almost MDS codes from BCH codes\footnotemark[2]}
		\footnotetext[2]{Supported by NSFC (Nos. 11971102, 12371035, 11801070, 12171241)}
		\vskip 0.8cm
		{\small Haojie Xu $^1$, Xia Wu$^{1*}$, Wei Lu$^1$, Xiwang Cao$^2$}\\
		
		{\small $^1$School of Mathematics, Southeast University, Nanjing 210096, China}\\
		{\small $^2$Department of Math, Nanjing University of Aeronautics and Astronautics, Nanjing 211100, China}\\
		{\small E-mail:
			xuhaojiechn@163.com, wuxia80@seu.edu.cn, luwei1010@seu.edu.cn, xwcao@nuaa.edu.cn}\\
		{\small $^*$Corresponding author. (Email: luwei1010@seu.edu.cn)}
		\vskip 0.8cm
	\end{center}

	{\bf Abstract:}
	In this paper, the sufficient and necessary condition for the minimum distance of the BCH codes over $\mathbb{F}_q$ with length $q+1$ and designed distance 3 to be 3 and 4 are provided. Let $d$ be the minimum distance of the BCH code $\mathcal{C}_{(q,q+1,3,h)}$. We prove that (1) for any $q$, $d=3$ if and only if $\gcd(2h+1,q+1)>1$; (2) for $q$ odd, $d=4$ if and only if $\gcd(2h+1,q+1)=1$. By combining these conditions with the dimensions of these codes, the parameters of this BCH code are determined completely when $q$ is odd. Moreover, several infinite families of MDS and almost MDS (AMDS) codes are shown. Furthermore, the sufficient conditions for these AMDS codes to be distance-optimal and dimension-optimal locally repairable codes are presented. Based on these conditions, several examples are also given.

	{\bf Keywords:} BCH code, MDS code, almost MDS code, locally repairable code
	
	{\bf MSC:} 94C10, 94B05, 94A60

	\section{Introduction}\label{sec1}
	
	Let $q$ be the power of a prime $p$ and $\mathbb{F}_q$ be the finite field with $q$ elements. An $[n,k]$ linear code $\mathcal{C}$ over $\mathbb{F}_q$ is a vector space of $\mathbb{F}_q^n$ with dimension $k$. The minimum distance $d$ of a linear code $\mathcal{C}$ is represented by $d=\min\{\mathrm{wt}(\mathbf{c}):\mathbf{c}\in\mathcal{C}\setminus \{\mathbf{0}\}\}$, where $\mathrm{wt}(\mathbf{c})$ is the number of nonzero coordinates of $\mathbf{c}$ and called the Hamming weight of $\mathbf{c}$. The dual code $\mathcal{C}^\perp$ of $\mathcal{C}$ is defined as 
	\begin{equation*}
		\mathcal{C}^\perp=\left\{\mathbf{v}\in\mathbb{F}_q^n:\mathbf{c}\cdot\mathbf{v}=0,\ \forall\;\mathbf{c}\in\mathcal{C}\right\},
	\end{equation*}
	where $\mathbf{c}\cdot\mathbf{v}$ is the inner product of $\mathbf{c}$ and $\mathbf{v}$. It is desired to design linear codes with the largest possible rate $\frac{k}{n}$ and minimum distance $d$ in coding theory. Nevertheless, there are some tradeoffs among $n$, $k$ and $d$. The Singleton bound indicates that $d\le n-k+1$ for a linear  code with parameters $[n,k,d]$. Linear codes that meet the Singleton bound are called maximum distance separable, or MDS for short. Note that if $\mathcal{C}$ is MDS so is the dual code $\mathcal{C}^\perp$. An $[n,k,n-k]$ code is said to be almost MDS (AMDS for short) \cite{Boer1996AlmostMDSCodes}. Unlike MDS code, the dual code of an AMDS code need not be AMDS. Furthermore, $\mathcal{C}$ is said to be near MDS (NMDS for short) if both $\mathcal{C}$ and its dual code $\mathcal{C}^\perp$ are AMDS codes \cite{Dodunekov1995NearMDSCodes}. MDS, NMDS and AMDS codes play vital roles in communications, data storage, combinatorial theory, and secret sharing \cite{Ding2020InfiniteFamiliesMDS,Tang2021InfiniteFamilyLinear,Heng2023NewInfiniteFamilies,Dodunekova1997AlmostMDSNearMDSCodes,Zhou2009SecretSharingScheme}.
	
	BCH codes are a crucial class of linear codes due to its exceptional error-correcting capabilities and simple encoding and decoding algorithms. Ding and Tang demonstrated that the narrow-sense BCH codes $\mathcal{C}_{(2^s,2^s+1,3,1)}$ for $s$ even, $\mathcal{C}_{(2^s,2^s+1,4,1)}$ for $s$ odd, and $\mathcal{C}_{(3^s,3^s+1,3,1)}$ are all NMDS codes \cite{Ding2020InfiniteFamiliesMDS,Tang2021InfiniteFamilyLinear}. Moreover, it was presented that the above NMDS codes constructed by Ding and Tang are $d$-optimal and $k$-optimal locally repairable codes by Tan et al. in \cite{Tan2023MinimumLocalityLinear}. Geng et al. showed that the BCH codes $\mathcal{C}_{(3^s,3^s+1,3,4)}$ with $s$ being odd is an AMDS code, which is also distance-optimal ($d$-optimal for short) and dimension-optimal ($k$-optimal for short) locally repair code \cite{Geng2022ClassAlmostMDS}. Locally repairable codes are widely used in distributed and cloud storage systems. There are some other works \cite{Li2023ConstructionsMDSCodes,Li2023ConstructionOptimalLocally,Heng2023MDSCodesDimension,Fu2023ConstructionSingletonTypeOptimal} related to constructing AMDS or NMDS codes to obtain locally repairable codes with optimal dimensions or distances in recent years. 
	Motivated by the tasks of \cite{Ding2020InfiniteFamiliesMDS,Geng2022ClassAlmostMDS}, the objective of this paper is to study the parameters of the BCH code $\mathcal{C}_{(q,q+1,3,h)}$ and its dual code. We establish the sufficient and necessary conditions for the minimum distance $d$ of $\mathcal{C}_{(q,q+1,3,h)}$ to be 3 and 4, respectively. Applying these conditions, several infinite classes of MDS codes and AMDS codes are provided. Furthermore, we determine that these AMDS codes are $d$-optimal and $k$-optimal locally repairable codes when $q>4h$.
	
	The rest of this paper is organized as follows. In \cref{sec2}, we provide some concepts of cyclic codes, BCH codes, and locally repairable codes. In \cref{sec3}, we present the sufficient and necessary conditions for the minimum distance $d$ of $\mathcal{C}_{(q,q+1,3,h)}$ to be 3 and 4, respectively. In addition, we obtain several infinite families of MDS codes and AMDS codes. In \cref{sec4}, we give several classes of AMDS codes which are $d$-optimal and $k$-optimal locally repairable codes. In \cref{sec5}, we conclude this paper.
	
	\section{Preliminaries}\label{sec2}
	
	\subsection{Cyclic codes and BCH codes}\label{sec2.1}
	
	An $[n,k]$ code $\mathcal{C}$ over $\mathbb{F}_q$ is called \textit{cyclic} if $\mathbf{c}=(c_0,c_1,\cdots,c_{n-2},c_{n-1})\in\mathcal{C}$ implies $(c_{n-1},c_0,c_1,\cdots,c_{n-2})\in\mathcal{C}$. Throughout this subsection, we assume that $\gcd(n,q)=1$. The residue class ring $\mathbb{F}_q[x]/(x^n-1)$ is isomorphic to $\mathbb{F}_q^n$ as a vector space over $\mathbb{F}_q$, where the isomorphism is given by $(c_0,c_1,\cdots,c_{n-1})\leftrightarrow c_0+c_1x+\cdots+c_{n-1}x^{n-1}$. Because of this isomorphism, any linear code $\mathcal{C}$ of length $n$ over $\mathbb{F}_q$ corresponds to a subset of $\mathbb{F}_q[x]/(x^n-1)$. Then we can interpret $\mathcal{C}$ as a subset of $\mathbb{F}_q[x]/(x^n-1)$. It is easily checked that the linear code $\mathcal{C}$ is cyclic if and only if $\mathcal{C}$ is an ideal of $\mathbb{F}_q[x]/(x^n-1)$.
	
	Since every ideal of $\mathbb{F}_q[x]/(x^n-1)$ is principal, every cyclic code $\mathcal{C}$ is generated by the unique monic polynomial $g(x)\in\mathbb{F}_q[x]$ of the lowest degree, i.e., $\mathcal{C}=\left<g(x)\right>$. $g(x)$ is called the \textit{generator polynomial} of $\mathcal{C}$ and $h(x)=(x^n-1)/g(x)$ is called the \textit{parity-check polynomial} of $\mathcal{C}$. It should be noticed that the generator polynomial $g(x)$ is a factor of $x^n-1$, thus we have to study the canonical factorization of $x^n-1$ over $\mathbb{F}_q$ to handle $g(x)$. With this intention, we need to introduce $q$-cyclotomic cosets modulo $n$ \cite{Ding2014CodesDifferenceSetsa}.
	
	Let $\mathbb{Z}_n$ denotes the set $\{0,1,2,\cdots,n-1\}$ and let $s<n$ be a nonnegative integer. The \textit{$q$-cyclotomic coset of s modulo $n$} is given by 
	\begin{equation}\label{sec2.1 equ1}
		C_s=\{s,sq,sq^2,\ldots,sq^{\ell_s-1}\}\bmod n\subseteq\mathbb{Z}_n,
	\end{equation}
	where $\ell_s$ is the smallest positive integer such that $s \equiv sq^{\ell_s} \pmod n$, and is the size of the $q$-cyclotomic coset. The smallest integer in $C_s$ is called the \textit{coset leader} of $C_s$. Let $\Gamma_{(n,q)}$ be the set of all the coset leaders. We have then $C_s \cap C_t = \emptyset$ for any two distinct elements $s$ and $t$ in $\Gamma_{(n,q)}$, and
	\begin{equation}\label{sec2.1 equ2}
		\bigcup_{s\in\Gamma_{(n,q)}}C_s=\mathbb{Z}_n.
	\end{equation}
	Therefore, the distinct $q$-cyclotomic cosets modulo $n$ partition $\mathbb{Z}_n$. 
		
	Let $m$ be the multiplicative order of $q$ modulo $n$. Let $\alpha$ be a generator of $\mathbb{F}_{q^m}$ and $\beta=\alpha^{(q^m-1)/n}$. Then $\beta$ is a primitive $n$-th root of unity in $\mathbb{F}_{q^m}$. Denote the minimal	polynomial over $\mathbb{F}_q$ of $\beta^s$ by $\mathrm{M}_{\beta^s}(x)$. And this polynomial is given by
	\begin{equation}\label{sec2.1 equ3}
		\mathrm{M}_{\beta^s}(x)=\prod\limits_{i\in C_s}(x-\beta^i)\in\mathbb{F}_q[x],
	\end{equation}
	which is irreducible over $\mathbb{F}_q$. It then follows from \cref{sec2.1 equ1} that
	\begin{equation*}
		x^n-1=\prod_{s\in\Gamma_{(n,q)}}\mathrm{M}_{\beta^s}(x),
	\end{equation*}
	which is the canonical factorization of $x^n - 1$ over $\mathbb{F}_q$.
	
	We proceed to recall a crucial family of cyclic codes, namely BCH codes. Let $h$ be a nonnegative integer and $m$ be the multiplicative order of $q$ modulo $n$. Suppose that $\beta\in\mathbb{F}_{q^m}$ is a primitive $n$-th root of unity and $\delta$ is an integer with $2\le\delta\le n$. Then a \textit{BCH code} over $\mathbb{F}_q$ of length $n$ and \textit{designed distance} $\delta$, denoted by $\mathcal{C}_{(q,n,\delta,h)}$, is a cyclic code generated by 
	\begin{equation*}
		g_{(q,n,\delta,h)}=\mathrm{lcm}\left(\mathrm{M}_{\beta^h}(x),\mathrm{M}_{\beta^{h+1}}(x),\cdots,\mathrm{M}_{\beta^{h+\delta-2}}(x)\right),
	\end{equation*}
	where the least common multiple is computed over $\mathbb{F}_q$ and $\mathrm{M}_{\beta^s}(x)$ is represented in \cref{sec2.1 equ3}. If $h=1$, the BCH code $\mathcal{C}_{(q,n,\delta,h)}$ is called \textit{narrow-sense BCH code}. If $n=q^m-1$, the corresponding BCH codes are called \textit{primitive}.  According to BCH bound, the minimum distance of the BCH code $\mathcal{C}_{(q,n,\delta,h)}$ is at least $\delta$.
	
	\subsection{Locally repairable codes}
	
	A \textit{locally repairable code} (LRC) with \textit{locality} $r$ is a code that allows the repair of any symbol within a codeword by accessing at most $r$ other symbols in the codeword. A linear code of code length	$n$, dimension $k$, minimum distance $d$ and locality $r$ over $\mathbb{F}_q$ is referred to as an $(n, k, d, q; r)$-LRC \cite{Tan2019OptimalCyclicLocally}. Similar to Singleton bound, there are some tradeoffs among the locality, length, dimension,  and minimum distance of LRCs.
	
	\begin{lemma}[\textit{Singleton-like bound} \cite{Gopalan2012LocalityCodewordSymbols}]
		\label{sec2 lem1}
		For an $(n, k, d, q; r)$-LRC, we have
		\begin{equation}\label{sec2.2 equ1}
			d\leq n-k-\left\lceil\frac{k}{r}\right\rceil+2.
		\end{equation}
	\end{lemma}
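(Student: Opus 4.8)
The plan is to prove \eqref{sec2.2 equ1} by the standard matroid-rank (puncturing) argument for locally repairable codes. The starting point is a reformulation of the minimum distance. For a linear $[n,k]$ code $\mathcal{C}$, write $\pi_S\colon\mathcal{C}\to\mathbb{F}_q^{|S|}$ for the projection onto the coordinates in a set $S\subseteq\{1,\dots,n\}$. A nonzero codeword of weight $w$ vanishes on a set of $n-w$ coordinates, and a set $S$ supports such a vanishing codeword exactly when $\pi_S$ has nontrivial kernel, i.e. when $\dim\pi_S(\mathcal{C})\le k-1$. Running this over all codewords gives the identity
\begin{equation*}
	d=n-\max\bigl\{\,|S|:\dim\pi_S(\mathcal{C})\le k-1\,\bigr\}.
\end{equation*}
Hence it suffices to exhibit a single coordinate set $S$ with $\dim\pi_S(\mathcal{C})\le k-1$ and $|S|\ge k+\lceil k/r\rceil-2$; the bound \eqref{sec2.2 equ1} then follows at once, since $d\le n-|S|$.

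Next I would translate the locality hypothesis into a statement about ranks. By the definition of locality $r$, every coordinate $i$ admits a repair set $R_i$ with $|R_i|\le r$ such that $c_i$ is determined by $(c_j)_{j\in R_i}$ for every codeword $c$. Thus the local group $\Gamma_i:=\{i\}\cup R_i$ has size at most $r+1$, and its projection obeys a nontrivial linear (parity) relation, so that $\dim\pi_{\Gamma_i}(\mathcal{C})\le r$. The crucial point is the gap of at least one between the size ($\le r+1$) and the rank ($\le r$) of each local group; accumulating these gaps is exactly what produces the extra $\lceil k/r\rceil$ term in the bound.

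The core of the proof is then a greedy construction of $S$. Starting from $S=\emptyset$, I repeatedly pick a coordinate $i\notin S$ whose adjunction strictly increases $\dim\pi_S(\mathcal{C})$ and enlarge $S$ by the whole local group $\Gamma_i$, stopping as soon as a further step would push the rank up to $k$ (so that the final rank is $\le k-1$). In each round the rank increases by at least $1$ and, since $\dim\pi_{\Gamma_i}(\mathcal{C})\le r$, by at most $r$; hence accumulating rank $k-1$ requires at least $\lceil(k-1)/r\rceil\ge\lceil k/r\rceil-1$ rounds. Because each freshly adjoined local group carries its own parity relation, each round adds (amortized) at least one coordinate beyond the rank it contributes, so summing the per-round size increments against the rank increments yields $|S|\ge(k-1)+(\lceil k/r\rceil-1)=k+\lceil k/r\rceil-2$, which is what we need.

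The step I expect to be the main obstacle is precisely this amortized counting in the greedy phase. When a newly adjoined local group overlaps the coordinates already chosen, its parity relation may have been partly spent, so the clean inequality ``size increment $\ge$ rank increment $+1$'' can fail in a single round; one must use submodularity of the rank function $S\mapsto\dim\pi_S(\mathcal{C})$ (or a careful rule for selecting which coordinate to expand) to guarantee that the total number of redundant coordinates collected over the whole run is still at least $\lceil k/r\rceil-1$, and to make the ceiling together with the boundary term from the final, possibly partial, group come out exactly. The degenerate case $r\ge k$ should be checked separately: there $\lceil k/r\rceil=1$ and the statement collapses to the ordinary Singleton bound $d\le n-k+1$.
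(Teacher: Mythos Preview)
The paper does not prove this lemma at all; it is quoted as a known result with a citation to Gopalan--Huang--Simitci--Yekhanin. So there is no in-paper argument to compare against, and your proposal should be judged on its own.

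Your plan is the standard one from that reference and is essentially correct, but two points deserve adjustment.

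First, the difficulty you flag about overlapping local groups is not actually a difficulty. The per-round inequality $s_t\ge a_t+1$ holds cleanly: since $c_i$ is determined by $(c_j)_{j\in R_i}$, one has $\dim\pi_{S\cup\Gamma_i}(\mathcal{C})=\dim\pi_{S\cup R_i}(\mathcal{C})$, so the rank increment when adjoining $\Gamma_i$ is at most $|R_i\setminus S|$. On the other hand the size increment is $|\Gamma_i\setminus S|=1+|R_i\setminus S|$ because $i\notin S$ and $i\notin R_i$. The extra ``$+1$'' is always supplied by $i$ itself, regardless of how $R_i$ overlaps $S$; no separate submodularity argument is needed.

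Second, the genuine gap is in your stopping rule and ceiling count. Halting just before the rank would reach $k$ leaves you at some rank $r_{T-1}\in[k-r,\,k-1]$, not necessarily $k-1$, so the line ``accumulating rank $k-1$ requires at least $\lceil(k-1)/r\rceil$ rounds'' is not justified by what you have built. The clean fix is to run the greedy process all the way to rank $k$, say in $T$ rounds; since each $a_t\le r$ and $\sum_{t\le T}a_t=k$, one gets $T\ge\lceil k/r\rceil$. In the final round add the coordinates of $\Gamma_{i_T}$ one at a time and stop the moment rank hits $k-1$; this partial round contributes size at least $a_T-1$ rather than $a_T+1$. Summing,
\[
|S|\;\ge\;\sum_{t<T}(a_t+1)+(a_T-1)\;=\;k+T-2\;\ge\;k+\Bigl\lceil\tfrac{k}{r}\Bigr\rceil-2,
\]
and then $d\le n-|S|$ gives \eqref{sec2.2 equ1}. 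The case $r\ge k$ indeed collapses to the ordinary Singleton bound, as you note.
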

	
	\begin{lemma}[\textit{Cadambe-Mazumdar bound} \cite{Cadambe2015BoundsSizeLocally}]
		\label{sec2 lem2}
		For an $(n, k, d, q; r)$-LRC, we have 
		\begin{equation}\label{sec2.2 equ2}
			k\le\min_{t\in\mathbb{N}}\left\{tr+k_{opt}^{(q)}(n-t(r+1),d)\right\},
		\end{equation}
		where $k_{opt}^{(q)}(n,d)$ denotes the largest possible dimension of a linear code with code length $n$, minimum distance $d$ and alphabet size $q$, and $\mathbb{N}$ is the set of all non-negative integers.
	\end{lemma}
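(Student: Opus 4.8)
The plan is to prove the bound by a shortening argument, establishing the inequality $k \le tr + k_{opt}^{(q)}(n - t(r+1), d)$ for each fixed $t \in \mathbb{N}$ with $t(r+1) \le n$ (for larger $t$ the term $tr$ alone already exceeds $k$, so those cases are trivial) and then taking the minimum over $t$. First I would translate the locality hypothesis into linear algebra. Fix a generator matrix $G$ of the $(n,k,d,q;r)$-LRC $\mathcal{C}$, and for each coordinate $i$ let $\Gamma_i$ be a repair group with $i \in \Gamma_i$ and $|\Gamma_i| \le r+1$. Since the symbol at position $i$ is an $\mathbb{F}_q$-linear combination of the at most $r$ symbols indexed by $\Gamma_i \setminus \{i\}$, the $i$-th column $g_i$ of $G$ lies in the span of those columns; hence the submatrix $G_{\Gamma_i}$ of $G$ formed by the columns in $\Gamma_i$ satisfies $\mathrm{rank}(G_{\Gamma_i}) \le r$. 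Thus every repair group contributes at least one linear dependency among its columns.

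Next I would build a coordinate set $S$ of size exactly $t(r+1)$ with $\mathrm{rank}(G_S) \le tr$. The idea is greedy: repeatedly enlarge $S$, at each stage either adjoining the repair group $\Gamma_j$ of some as-yet-uncovered coordinate $j$ (which raises $|S|$ by at most $r+1$ while raising $\mathrm{rank}(G_S)$ by at most $\mathrm{rank}(G_{\Gamma_j}) \le r$), or, when possible, adjoining a coordinate whose column already lies in the column span of $G_S$ (which raises $|S|$ by one and leaves the rank unchanged), until $|S| = t(r+1)$. The construction is designed so that accumulating the dependencies contributed by $t$ groups forces the rank deficiency $|S| - \mathrm{rank}(G_S)$ to be at least $t$, giving $\mathrm{rank}(G_S) \le t(r+1) - t = tr$. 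I would then shorten $\mathcal{C}$ on $S$: the shortened code $\mathcal{C}_S$, consisting of the codewords of $\mathcal{C}$ that vanish on $S$ with the $S$-coordinates deleted, has length $n - t(r+1)$, minimum distance at least $d$ (shortening never decreases the minimum distance), and dimension $k - \mathrm{rank}(G_S) \ge k - tr$. By the definition of $k_{opt}^{(q)}$ this dimension is at most $k_{opt}^{(q)}(n - t(r+1), d)$, so $k - tr \le k_{opt}^{(q)}(n - t(r+1), d)$; minimizing over $t$ yields \cref{sec2.2 equ2}.

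The delicate point, and the step I expect to be the main obstacle, is the simultaneous control in the construction of $S$: driving its size up to exactly $t(r+1)$ while keeping the rank deficiency at least $t$ when the repair groups overlap, since overlapping groups enlarge $S$ by fewer than $r+1$ coordinates and may contribute fewer fresh dependencies than a naive count suggests. Making this greedy bookkeeping rigorous, for instance by phrasing it as a rank (matroid) argument on the columns of $G$ and verifying that padding with redundant coordinates remains available below length $n$, is where the real work lies. I would also emphasize that shortening, rather than puncturing, is essential here: puncturing on $S$ would produce a code of the correct length $n - t(r+1)$ and dimension at least $k - tr$, but its minimum distance could fall below $d$, so the quantity $k_{opt}^{(q)}(\,\cdot\,, d)$ could not be invoked; shortening is precisely the operation that preserves the distance lower bound $d$ while discarding the redundant coordinates.
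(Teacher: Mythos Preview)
The paper does not prove this lemma at all: it is stated with a citation to \cite{Cadambe2015BoundsSizeLocally} and used as a black box, so there is no ``paper's own proof'' to compare against. Your sketch follows the standard shortening argument behind the Cadambe--Mazumdar bound (specialized to the linear setting via ranks of column submatrices of a generator matrix), and the overall strategy is sound.

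That said, you have correctly flagged, but not resolved, the only genuinely nontrivial step: building $S$ of size exactly $t(r+1)$ with $\mathrm{rank}(G_S)\le tr$ when repair groups overlap. Your greedy description is suggestive but not yet a proof. One clean way to finish is to argue inductively that after adjoining $j$ repair groups (and possibly padding with redundant coordinates) one can maintain $|S|-\mathrm{rank}(G_S)\ge j$; the point is that adjoining a new group $\Gamma$ with $\Gamma\not\subseteq S$ always adds at least one more column than it adds to the rank, because $G_\Gamma$ already has a nontrivial dependency, and if $\Gamma\subseteq S$ for every remaining coordinate then every column of $G$ is already in the span of $G_S$, so padding with any leftover coordinate keeps the rank fixed. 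Spelling this out carefully (or invoking the matroid/submodularity formulation from the original reference) would close the gap you identified.
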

	
	For any $(n, k, d, q; r)$-LRC, the code is addressed as distance-optimal if the distance $d$ achieves the Singleton-like bound in \cref{sec2.2 equ1} and dimension-optimal if the dimension $k$ meet the Cadambe-Mazumdar bound in \cref{sec2.2 equ2}.
	
	The locality of a nontrivial cyclic code can be determined by the minimum distance of its dual code \cite{Tan2023MinimumLocalityLinear}.
	\begin{lemma}
		\label{sec2 lem3}
		A nontrivial cyclic code has locality $d(\mathcal{C}^\perp)-1$.
	\end{lemma}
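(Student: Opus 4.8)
The plan is to characterize the locality of a linear code through the support structure of its dual code, and then to exploit the cyclic symmetry to show that every coordinate incurs the same local repair cost. First I would recall the standard fact that, for a linear code $\mathcal{C}$, a coordinate $i$ can be recovered from a collection of other coordinates $R$ precisely when there exists a dual codeword $\mathbf{v}\in\mathcal{C}^\perp$ with $i\in\mathrm{supp}(\mathbf{v})\subseteq R\cup\{i\}$; indeed, such a $\mathbf{v}$ furnishes a parity-check relation expressing $c_i$ as a linear combination of $\{c_j : j\in R\}$, and conversely any successful linear recovery rule yields such a dual codeword. Consequently the locality of the single coordinate $i$ equals $\min\{\,|\mathrm{supp}(\mathbf{v})|-1 : \mathbf{v}\in\mathcal{C}^\perp,\ i\in\mathrm{supp}(\mathbf{v})\,\}$, and the locality of $\mathcal{C}$ is the maximum of this quantity over all $i$.

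With this characterization in hand, the lower bound is immediate: since every nonzero codeword of $\mathcal{C}^\perp$ has Hamming weight at least $d(\mathcal{C}^\perp)$, any dual codeword whose support contains $i$ has at least $d(\mathcal{C}^\perp)$ nonzero entries, so the locality of each coordinate is at least $d(\mathcal{C}^\perp)-1$.

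For the matching upper bound I would invoke cyclicity. Since $\mathcal{C}$ is cyclic, its dual $\mathcal{C}^\perp$ is cyclic as well, so the set of minimum-weight codewords of $\mathcal{C}^\perp$ is closed under cyclic shifts. Fixing one such codeword $\mathbf{v}^{*}$ of weight $d(\mathcal{C}^\perp)$, its support is a nonempty subset of $\mathbb{Z}_n$, so for any target coordinate $i$ there is a cyclic shift of $\mathbf{v}^{*}$ whose support contains $i$; this shift is again a dual codeword of weight exactly $d(\mathcal{C}^\perp)$. Hence the locality of coordinate $i$ is at most $d(\mathcal{C}^\perp)-1$ for every $i$. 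Combining the two bounds shows that each coordinate has locality exactly $d(\mathcal{C}^\perp)-1$, and therefore so does $\mathcal{C}$.

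The main obstacle is really the first step, namely pinning down the correct dual-code characterization of local recoverability and checking that the nontriviality hypothesis guarantees that $\mathcal{C}^\perp$ is a nonzero proper code, so that $d(\mathcal{C}^\perp)$ is well defined and the minimum-weight codeword $\mathbf{v}^{*}$ used in the upper bound genuinely exists. Once that foundation is secure, the cyclic-shift argument is routine.
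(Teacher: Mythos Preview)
Your argument is correct and is essentially the standard proof of this fact. Note, however, that the paper does not actually prove this lemma: it is quoted from \cite{Tan2023MinimumLocalityLinear} and stated without proof, so there is no in-paper argument to compare against. Your dual-code characterization of repair sets together with the cyclic-shift trick to cover every coordinate with a minimum-weight dual codeword is exactly the reasoning behind the cited result, and the nontriviality hypothesis is used precisely where you indicate, to ensure $\mathcal{C}^\perp\neq\{\mathbf{0}\}$ so that a minimum-weight dual codeword exists.
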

	
	\section{Infinite families of MDS and almost MDS codes}\label{sec3}
	Throughout this section, let $q$ be the power of a prime $p$, $0\le h\le q$ be a positive integer, and $U_{q+1}$ be the set of all $(q+1)$-th roots of unity in $\mathbb{F}_{q^2}$. In this section, we consider the BCH code $\mathcal{C}_{(q,q+1,3,h)}$ and its dual. 
	
	We will first examine the dimensions of $\mathcal{C}_{(q,q+1,3,h)}$.
	\begin{theorem}\label{sec3 thm0}
		Suppose that the dimensions of $\mathcal{C}_{(q,q+1,3,h)}$ is $k$. Then for $q$ even,
		\begin{equation*}
			k=\left\{\begin{array}{ll}
				q-1\quad & \mathrm{if}\ h=\frac{q}{2},\\
				q-2\quad & \mathrm{if}\ h=0\ \mathrm{or}\ q,\\
				q-3\quad & \mathrm{if}\ \mathrm{otherwise};
			\end{array}\right.
		\end{equation*}
		for $q$ odd,
		\begin{equation*}
			k=\left\{\begin{array}{ll}
				q-2\quad & \mathrm{if}\ h=0,\frac{q-1}{2},\frac{q+1}{2}\ \mathrm{or}\ q,\\
				q-3\quad & \mathrm{if}\ \mathrm{otherwise}.
			\end{array}\right.
		\end{equation*}
	\end{theorem}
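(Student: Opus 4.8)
The plan is to compute the dimension through the identity $k=(q+1)-\deg g_{(q,q+1,3,h)}$, thereby reducing everything to the degree of the generator polynomial. Since the designed distance is $\delta=3$, we have $h+\delta-2=h+1$, so the generator collapses to $g=\mathrm{lcm}\!\left(\mathrm{M}_{\beta^h}(x),\mathrm{M}_{\beta^{h+1}}(x)\right)$, involving only two minimal polynomials. Because $\deg\mathrm{M}_{\beta^s}(x)=|C_s|$ by \cref{sec2.1 equ3}, and distinct $q$-cyclotomic cosets are disjoint (so the corresponding irreducible minimal polynomials are coprime), the degree of the least common multiple is exactly $\deg g=|C_h\cup C_{h+1}|$. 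Thus the whole problem reduces to determining the size of the union of the two consecutive cyclotomic cosets $C_h$ and $C_{h+1}$ modulo $n=q+1$.

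The next step is to pin down the structure of the $q$-cyclotomic cosets modulo $q+1$. Since $q\equiv-1\pmod{q+1}$, we get $sq\equiv-s$ and $sq^2\equiv s\pmod{q+1}$, so every coset has the shape $C_s=\{s,\,(q+1-s)\bmod(q+1)\}$ and hence size at most $2$. A coset is a singleton precisely when $s\equiv-s\pmod{q+1}$, i.e. $2s\equiv0\pmod{q+1}$. When $q$ is even, $q+1$ is odd and the only solution is $s\equiv0$, so $C_0=\{0\}$ is the unique singleton; when $q$ is odd, $q+1$ is even and there are two solutions $s\equiv0$ and $s\equiv(q+1)/2$, giving the two singletons $C_0=\{0\}$ and $C_{(q+1)/2}=\{(q+1)/2\}$. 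This even/odd dichotomy is exactly what produces the two separate formulas in the statement.

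With this description in hand, I would carry out a short case analysis on $h$. Generically both $C_h$ and $C_{h+1}$ have size $2$ and are distinct, giving $|C_h\cup C_{h+1}|=4$ and $k=q-3$. The count drops in two ways. First, if one of the two cosets is a singleton, i.e. $h$ or $h+1$ equals $0$ or (in the odd case) $(q+1)/2$, the union has size $3$ and $k=q-2$; for $q$ even this isolates $h=0$ and $h=q$ (where $h+1\equiv0$), while for $q$ odd it isolates $h\in\{0,(q-1)/2,(q+1)/2,q\}$. Second, the two cosets can coincide: $C_h=C_{h+1}$ forces $\{h,q+1-h\}=\{h+1,q-h\}$, which reduces to $2h=q$; this is solvable only for $q$ even, namely $h=q/2$, and then $|C_h\cup C_{h+1}|=2$ and $k=q-1$, accounting for the remaining exceptional value in the even case.

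The main bookkeeping obstacle is the careful treatment of the boundary index $h=q$, where $h+1=q+1\equiv0$ wraps around to the singleton $C_0$, together with verifying that the ``special'' values of $h$ are pairwise distinct and never overlap the coincidence case. Once one checks, directly from the explicit description $C_s=\{s,q+1-s\}$, that a singleton coset and a size-two coset are disjoint for all the relevant indices, assembling the three possible outcomes $|C_h\cup C_{h+1}|\in\{2,3,4\}$ over the disjoint ranges of $h$ yields precisely the two tabulated formulas.
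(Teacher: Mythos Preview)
Your proposal is correct and follows essentially the same approach as the paper: both reduce the dimension computation to $k=(q+1)-\deg g$ with $g=\mathrm{lcm}(\mathrm{M}_{\beta^h},\mathrm{M}_{\beta^{h+1}})$, then determine $\deg g$ by analysing the $q$-cyclotomic cosets $C_h$ and $C_{h+1}$ modulo $q+1$ via the relation $q\equiv-1\pmod{q+1}$. Your framing of the computation as determining $|C_h\cup C_{h+1}|\in\{2,3,4\}$ is a mild streamlining of the paper's more explicit case-by-case enumeration, but the substance is identical.
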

	\begin{proof}
		Let $\alpha$ be a generator of $\mathbb{F}_{q^2}^{\ast}$ and $\beta=\alpha^{q-1}$. It is clear that $\beta$ is a primitive $(q+1)$-th root of unity in $\mathbb{F}_{q^2}$. Let $g_h(x)$ and $g_{h+1}(x)$ denote the minimal polynomials of $\beta^h$ and $\beta^{h+1}$ over $\mathbb{F}_q$, respectively. By definition, $g(x)=\mathrm{lcm}\left(g_h(x),g_{h+1}(x)\right)$ is the generator polynomial of $\mathcal{C}_{(q,q+1,3,h)}$. Therefore, the dimension of $\mathcal{C}_{(q,q+1,3,h)}$ is $q+1-\deg(g(x))$. That is to say, we only need to compute the degree of $g(x)$.
		
		Both $g_h(x)$ and $g_{h+1}(x)$ can be calculated from \cref{sec2.1 equ3}. Thus we first present the $q$-cyclotomic cosets modulo $q+1$ according to \cref{sec2.1 equ1}. For $q$ even, the $q$-cyclotomic cosets modulo $q+1$ are 
		\begin{equation}\label{sec3 thm0 equ1}
			\{0\},\{1,q\},\ldots,\{i,q+1-i\},\ldots,\{q/2,(q+2)/2\}.
		\end{equation}
		For $q$ odd, the $q$-cyclotomic cosets modulo $n$ are 
		\begin{equation}\label{sec3 thm0 equ2}
			\{0\},\{1,q\},\ldots,\{i,q+1-i\},\ldots,\{(q-1)/2,(q+3)/2\},\{(q+1)/2\}.
		\end{equation}
		For any $q$, if $h=0\ \mathrm{or}\ q$, we have $g(x)=(x-\beta^0)(x-\beta^1)(x-\beta^q)$ by \cref{sec3 thm0 equ1,sec3 thm0 equ2}. Since $\beta$ is a primitive $(q+1)$-th root of unity, $\deg(g(x))=3$.
		
		When $q$ is even, $\frac{q}{2}$ is an integer. 
		\begin{itemize}
			\item If $h=\frac{q}{2}$, then $h+1=\frac{q}{2}+1$, and so $h+(h+1)=q+1$. It follows from \cref{sec3 thm0 equ1} that $g_h(x)=g_{h+1}(x)$. Hence $g(x)=g_h(x)$. Then $\deg(g(x))=2$.
			\item If $h\notin\{0,\frac{q}{2},q\}$, then $h,h+1,q-h,q+1-h$ are pairwise distinct. It follows from \cref{sec3 thm0 equ1} that $g_h(x)$ has only roots $\beta^{h}$ and $\beta^{q+1-h}$ and $g_{h+1}(x)$ has roots $\beta^{h+1}$ and $\beta^{q-h}$. Thus one can derive that $g_h(x)$ and $g_{h+1}(x)$ are distinct irreducible polynomials of degree 2. Then $\deg(g(x))=4$.
		\end{itemize}
		
		When $q$ is odd, $\frac{q-1}{2}$ and $\frac{q+1}{2}$ are both integers. 
		\begin{itemize}
			\item If $h=\frac{q-1}{2}\ \mathrm{or}\ \frac{q+1}{2}$, we have $g(x)=(x-\beta^{\frac{q-1}{2}})(x-\beta^{\frac{q+1}{2}})(x-\beta^\frac{q+3}{2})$ by \cref{sec3 thm0 equ2}. Since $\beta$ is a primitive $(q+1)$-th root of unity, $\deg(g(x))=3$.
			\item If $h\notin\{0,\frac{q-1}{2},\frac{q+1}{2},q\}$, then $h,h+1,q-h,q+1-h$ are pairwise distinct. Thus $g_h(x)$ and $g_{h+1}(x)$ are distinct irreducible polynomials of degree 2. Then $\deg(g(x))=4$.
		\end{itemize}
		This completes the proof.  
	\end{proof}
	 
	The dimensions of $\mathcal{C}_{(q,q+1,3,h)}^\perp$ can be directly obtained from \cref{sec3 thm0}.
	\begin{corollary}\label{sec3 cor0}
		Suppose that the dimensions of $\mathcal{C}_{(q,q+1,3,h)}^\perp$ is $k^\perp$. Then for $q$ even,
		\begin{equation*}
			k^\perp=\left\{\begin{array}{ll}
				2\quad & \mathrm{if}\ h=\frac{q}{2},\\
				3\quad & \mathrm{if}\ h=0\ \mathrm{or}\ q,\\
				4\quad & \mathrm{if}\ \mathrm{otherwise};
			\end{array}\right.
		\end{equation*}
		for $q$ odd,
		\begin{equation*}
			k^\perp=\left\{\begin{array}{ll}
				3\quad & \mathrm{if}\ h=0,\frac{q-1}{2},\frac{q+1}{2}\ \mathrm{or}\ q,\\
				4\quad & \mathrm{if}\ \mathrm{otherwise}.
			\end{array}\right.
		\end{equation*}
	\end{corollary}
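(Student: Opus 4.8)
The plan is to obtain this purely as a bookkeeping consequence of \cref{sec3 thm0} together with the standard fact that the dual of an $[n,k]$ linear code is an $[n,n-k]$ code. Since the code $\mathcal{C}_{(q,q+1,3,h)}$ has length $n=q+1$, I would start by recording the single identity
\begin{equation*}
	k^\perp=\dim\mathcal{C}_{(q,q+1,3,h)}^\perp=n-k=(q+1)-k,
\end{equation*}
where $k$ is the dimension of $\mathcal{C}_{(q,q+1,3,h)}$ supplied by \cref{sec3 thm0}. This reduces the entire statement to substituting the case-by-case values of $k$ and simplifying.

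The substitution splits along the same case distinction (even $q$ versus odd $q$, and the distinguished values of $h$) already laid out in \cref{sec3 thm0}, so no new casework is introduced. For $q$ even I would check the three branches in turn: $h=q/2$ gives $k=q-1$ and hence $k^\perp=(q+1)-(q-1)=2$; $h\in\{0,q\}$ gives $k=q-2$ and hence $k^\perp=3$; and every remaining $h$ gives $k=q-3$ and hence $k^\perp=4$. For $q$ odd the two branches are $h\in\{0,(q-1)/2,(q+1)/2,q\}$ with $k=q-2$, yielding $k^\perp=3$, and all other $h$ with $k=q-3$, yielding $k^\perp=4$. Each of these is a one-step arithmetic simplification of $(q+1)-k$ that exactly reproduces the tabulated value of $k^\perp$.

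There is essentially no obstacle here: the only thing to be careful about is matching each branch of the dimension formula in \cref{sec3 thm0} to the corresponding branch of the desired formula for $k^\perp$, and verifying that the partition of $h$-values is preserved under the passage to the dual (which it trivially is, since the case conditions on $h$ are unchanged and only the value of the dimension is transformed by $k\mapsto q+1-k$). Consequently the proof I envision is short, consisting of the dimension identity followed by the six routine evaluations above, and I would simply remark that the result follows immediately from \cref{sec3 thm0}.
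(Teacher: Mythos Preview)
Your proposal is correct and matches the paper's approach exactly: the paper states that the corollary ``can be directly obtained from \cref{sec3 thm0}'' and gives no further argument, which is precisely the $k^\perp=(q+1)-k$ substitution you carry out case by case.
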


	We will next consider the minimum distance of the BCH code $\mathcal{C}_{(q,q+1,3,h)}$. To begin with, we display a necessary lemma whose proof is straightforward.
	\begin{lemma}
		\label{sec3 lem1}
		Let $x, y, z \in \mathbb{F}_{q^2}^{\ast}$. Then
		\begin{equation*}
			\begin{aligned}&\left|\begin{array}{lll}
					x^{-h} & y^{-h} & z^{-h}\\ x^h & y^h & z^h\\ x^{h+1} & y^{h+1} & z^{h+1}
				\end{array}\right|\\
				=&x^{-h}y^{-h}z^{-h}\left[\left(x^{2h+1}-y^{2h+1}\right)\left(z^{2h}-y^{2h}\right)-\left(z^{2h+1}-y^{2h+1}\right)\left(x^{2h}-y^{2h}\right)\right],
			\end{aligned}
		\end{equation*}
		and 
		\begin{equation*}
			\begin{aligned}
				&\left|\begin{array}{lll}x^{-(h+1)} & y^{-(h+1)} & z^{-(h+1)}\\ x^h & y^h & z^h\\ x^{h+1} & y^{h+1} & z^{h+1}\end{array}\right|\\
				=&x^{-(h+1)}y^{-(h+1)}z^{-(h+1)}\cdot\\
				&\left[\left(x^{2h+2}-y^{2h+2}\right)\left(z^{2h+1}-y^{2h+1}\right)-\left(z^{2h+2}-y^{2h+2}\right)\left(x^{2h+1}-y^{2h+1}\right)\right].
			\end{aligned}
		\end{equation*}
	\end{lemma}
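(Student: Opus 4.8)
The plan is to reduce each determinant to a simpler form by pulling common powers out of the columns, and then to evaluate the resulting determinant by elementary column operations followed by a cofactor expansion. Since $x,y,z\in\mathbb{F}_{q^2}^{\ast}$, every power appearing below is invertible, so these manipulations are legitimate.

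For the first identity, I would factor $x^{-h}$, $y^{-h}$, and $z^{-h}$ out of the first, second, and third columns, respectively. This rewrites the determinant as $x^{-h}y^{-h}z^{-h}$ times
\[
\begin{vmatrix} 1 & 1 & 1\\ x^{2h} & y^{2h} & z^{2h}\\ x^{2h+1} & y^{2h+1} & z^{2h+1}\end{vmatrix}.
\]
To evaluate this, I would subtract the middle column from each of the two outer columns, which turns the top row into $(0,1,0)$ while the lower $2\times 2$ block acquires entries $x^{2h}-y^{2h}$, $z^{2h}-y^{2h}$ in its second row and $x^{2h+1}-y^{2h+1}$, $z^{2h+1}-y^{2h+1}$ in its third. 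Expanding along the top row, only the central entry contributes, and after reconciling the cofactor sign $(-1)^{1+2}$ one obtains precisely
\[
\left(x^{2h+1}-y^{2h+1}\right)\left(z^{2h}-y^{2h}\right)-\left(z^{2h+1}-y^{2h+1}\right)\left(x^{2h}-y^{2h}\right),
\]
as claimed.

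The second identity is handled by exactly the same procedure, this time factoring $x^{-(h+1)}$, $y^{-(h+1)}$, $z^{-(h+1)}$ out of the columns. The bottom two rows then carry the exponents $2h+1$ and $2h+2$ in place of $2h$ and $2h+1$, and the same column subtraction and expansion along $(0,1,0)$ deliver the stated bracket.

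I do not anticipate any genuine obstacle: the argument is entirely routine linear algebra, and the only step requiring care is keeping track of the sign produced by the cofactor expansion, which must be matched against the order in which the two products are written inside the bracket.
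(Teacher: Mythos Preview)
Your proposal is correct and is precisely the natural direct computation; the paper itself omits the proof entirely, remarking only that it is ``straightforward,'' so your argument is exactly the kind of verification the authors had in mind.
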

	
	\begin{lemma}\label{sec3 lem2}
		Assume that $x,y,z$ are three pairwise distinct elements in $U_{q+1}$. If
		\begin{equation*}
			\left|\begin{array}{lll}
				x^{-h} & y^{-h} & z^{-h}\\ x^h & y^h & z^h\\ x^{h+1} & y^{h+1} & z^{h+1}
			\end{array}\right|
			= 0,
		\end{equation*}
		then either $x^{2h+1} = y^{2h+1} = z^{2h+1}$ or $x^{2h} = y^{2h} = z^{2h}$.
	\end{lemma}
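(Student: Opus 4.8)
The plan is to first use \cref{sec3 lem1} to reduce the vanishing of the determinant to a single scalar equation, and then to exploit the defining property of $U_{q+1}$, namely that the Frobenius map $t\mapsto t^q$ acts as $t\mapsto t^{-1}$ on it. Since $x,y,z\in\mathbb{F}_{q^2}^{\ast}$, the factor $x^{-h}y^{-h}z^{-h}$ appearing in \cref{sec3 lem1} is nonzero, so the hypothesis is equivalent to the bracket vanishing. Writing $a=x^{2h},b=y^{2h},c=z^{2h}$ and $A=x^{2h+1},B=y^{2h+1},C=z^{2h+1}$, that bracket rearranges to
\[
A(b-c)+B(c-a)+C(a-b)=0 ,
\]
which is precisely the condition that the three points $(a,A),(b,B),(c,C)$ in $\mathbb{F}_{q^2}^2$ lie on a common affine line $\lambda_0+\lambda_1u+\lambda_2v=0$ with $(\lambda_1,\lambda_2)\neq(0,0)$.

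From here I would split into cases according to the line. If $\lambda_2=0$ the line is ``vertical'' and forces $a=b=c$, i.e. $x^{2h}=y^{2h}=z^{2h}$ (the second alternative). If $\lambda_2\neq0$ and the slope $m=-\lambda_1/\lambda_2$ is zero, the line is ``horizontal'' and forces $A=B=C$, i.e. $x^{2h+1}=y^{2h+1}=z^{2h+1}$ (the first alternative). Thus everything reduces to excluding a line of nonzero slope, that is, to showing there are no distinct $x,y,z\in U_{q+1}$ satisfying $t^{2h+1}=m\,t^{2h}+k$ with $m\neq0$.

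To exclude this I would first dispose of $k=0$: there $t^{2h}(t-m)=0$ gives $t=m$ for each of the three distinct elements, which is impossible. So assume $k\neq0$ and rewrite the line equation as $t^{2h}(t-m)=k$ for $t\in\{x,y,z\}$. The crucial step is to apply $s\mapsto s^q$ to $x^{2h+1}-m\,x^{2h}-k=0$ and use $x^q=x^{-1}$ to obtain $x^{-(2h+1)}-m^q x^{-2h}-k^q=0$; multiplying by $x^{2h+1}$ and substituting $x^{2h+1}=m\,x^{2h}+k$ produces a \emph{second} relation, $m k^q\,x^{2h}+m^q x+(k^{q+1}-1)=0$, again valid for all three of $x,y,z$. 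Using this to eliminate the term $t^{2h}$ from $t^{2h}(t-m)=k$ collapses the degree and leaves the single quadratic
\[
m^q\,t^2-\bigl(m^{q+1}-k^{q+1}+1\bigr)\,t+m=0 ,
\]
which must be satisfied by all of $x,y,z$. Its leading coefficient $m^q$ is nonzero, so it has at most two roots, contradicting the distinctness of $x,y,z$. Hence no line of nonzero slope can occur, and the two displayed alternatives exhaust all possibilities.

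The main obstacle is exactly this nonzero-slope case: mere collinearity of three points is far too weak, since a line can in principle meet the ``curve'' $t\mapsto(t^{2h},t^{2h+1})$ in as many as $2h+1$ points, so a naive degree count fails. The idea that rescues the argument is that the constraint $t^q=t^{-1}$ on $U_{q+1}$ furnishes a conjugate polynomial relation, and that combining it with the line equation cancels the top-degree terms, leaving only a quadratic that three distinct elements cannot satisfy. I would take care over the routine bookkeeping in this elimination (the degree collapse, the fact that $m^q\neq0$, and the harmless degenerate case $h=0$ where $a=b=c=1$ holds automatically), but I do not expect any difficulty there.
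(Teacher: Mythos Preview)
Your argument is correct, but it takes a genuinely different route from the paper's.

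The paper also starts from the bracket identity of \cref{sec3 lem1}, writing it as
\[
(x^{2h+1}-y^{2h+1})(z^{2h}-y^{2h})=(z^{2h+1}-y^{2h+1})(x^{2h}-y^{2h}),
\]
but then simply raises this whole equation to the $q$-th power. Using $t^q=t^{-1}$ on $U_{q+1}$, each difference picks up a monomial factor, and after clearing denominators the conjugate equation becomes $z\cdot(\text{LHS})=x\cdot(\text{RHS})$. If the common value is nonzero this forces $x=z$, a contradiction; hence both products vanish. A short four-case analysis (using only that $\gcd(2h,2h+1)=1$ rules out the two ``mixed'' cases, since they would force $x=y$ or $z=y$) then yields the dichotomy.

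Your proof reaches the same Frobenius idea but packages it geometrically: you read the bracket as collinearity of the points $(t^{2h},t^{2h+1})$ and isolate the sloped-line case as the only nontrivial one, then apply Frobenius to the line equation $t^{2h+1}=m\,t^{2h}+k$ and eliminate $t^{2h}$ to obtain a genuine quadratic in $t$. This is more work than the paper's two-line combination, but it makes transparent \emph{why} three solutions are impossible (degree collapse from $2h+1$ to $2$), and the same template applies verbatim to \cref{sec3 lem3} by shifting the exponents. The paper's approach buys brevity; yours buys a clearer picture of the mechanism and a reusable elimination scheme. One small point worth making explicit in your write-up: the three image points $(t^{2h},t^{2h+1})$ are themselves distinct (since $\gcd(2h,2h+1)=1$ and $x,y,z$ are distinct), so the line through them is well defined before you begin the slope case split.
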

	\begin{proof}
		It follows from \cref{sec3 lem1} that
		\begin{equation}\label{lem2 equ1}
			\left(x^{2h+1}-y^{2h+1}\right)\left(z^{2h}-y^{2h}\right)=\left(z^{2h+1}-y^{2h+1}\right)\left(x^{2h}-y^{2h}\right).
		\end{equation}
		Raising both sides of \cref{lem2 equ1} to $q$-th power yields
		\begin{equation*}
			\left(x^{-2h+1}-y^{-2h+1}\right)\left(z^{-2h}-y^{-2h}\right)=\left(z^{-2h+1}-y^{-2h+1}\right)\left(x^{-2h}-y^{-2h}\right),
		\end{equation*}
		which is the same as
		\begin{equation}\label{lem2 equ2}
			zy\left(y^{2h+1}-x^{2h+1}\right)\left(y^{2h}-x^{2h}\right)=xy\left(y^{2h+1}-z^{2h+1}\right)\left(y^{2h}-z^{2h}\right).
		\end{equation}
		If neither side of the \cref{lem2 equ1} is equal to 0, then we obtain $x=z$ by combining \cref{lem2 equ1,lem2 equ2}, a contradiction to $x\ne z$. Hence, both sides of the \cref{lem2 equ1} is zero. Since $\gcd(2h,2h+1)=1$, $\frac{x}{y}\ne1$, we have either $x^{2h+1} - y^{2h+1} = 0$ or $x^{2h}-y^{2h} = 0$. Similarly, we have either $z^{2h+1} - y^{2h+1} = 0$ or $z^{2h}-y^{2h} = 0$. Then it follows from \cref{lem2 equ1} that either $x^{2h+1} - y^{2h+1} = 0$ and $z^{2h+1} - y^{2h+1} = 0$, or $x^{2h}-y^{2h} = 0$ and $z^{2h}-y^{2h} = 0$. Therefore, either $x^{2h+1} = y^{2h+1} = z^{2h+1}$ or $x^{2h} = y^{2h} = z^{2h}$.
	\end{proof}
	
	Similarly, for the second determinant stated in \cref{sec3 lem1}, we present the following lemma. Its proof is omitted here as it is similar to the proof of \cref{sec3 lem2}.
	\begin{lemma}
		\label{sec3 lem3}
		Assume that $x,y,z$ are three pairwise distinct elements in $U_{q+1}$. If 
		\begin{equation*}
			\left|\begin{array}{lll}x^{-(h+1)} & y^{-(h+1)} & z^{-(h+1)}\\ x^h & y^h & z^h\\ x^{h+1} & y^{h+1} & z^{h+1}\end{array}\right|
			= 0,
		\end{equation*}
		then either $x^{2h+1} = y^{2h+1} = z^{2h+1}$ or $x^{2h+2} = y^{2h+2} = z^{2h+2}$.
	\end{lemma}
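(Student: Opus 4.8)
The plan is to follow the template of \cref{sec3 lem2}, with the exponent pair $(2h,2h+1)$ there replaced by $(2h+1,2h+2)$ here. First I would apply \cref{sec3 lem1} to the second determinant: since $x,y,z\in\mathbb{F}_{q^2}^{\ast}$ the scalar $x^{-(h+1)}y^{-(h+1)}z^{-(h+1)}$ is nonzero, so the hypothesis that the determinant vanishes is equivalent to the identity
\begin{equation*}
\left(x^{2h+2}-y^{2h+2}\right)\left(z^{2h+1}-y^{2h+1}\right)=\left(z^{2h+2}-y^{2h+2}\right)\left(x^{2h+1}-y^{2h+1}\right),
\end{equation*}
which I will call $(\ast)$.

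Next I would exploit that $x,y,z\in U_{q+1}$, so that $w^{q}=w^{-1}$ for each of them. Raising $(\ast)$ to the $q$-th power and using that the Frobenius map $w\mapsto w^{q}$ is additive turns every exponent $2h+1$ into $-(2h+1)$ and every $2h+2$ into $-(2h+2)$; clearing the resulting denominators (all nonzero) and cancelling the common factor $x^{2h+1}z^{2h+1}$ should yield
\begin{equation*}
z\left(y^{2h+2}-x^{2h+2}\right)\left(y^{2h+1}-z^{2h+1}\right)=x\left(y^{2h+2}-z^{2h+2}\right)\left(y^{2h+1}-x^{2h+1}\right).
\end{equation*}
Writing $P=\left(y^{2h+2}-x^{2h+2}\right)\left(y^{2h+1}-z^{2h+1}\right)$ and $Q=\left(y^{2h+2}-z^{2h+2}\right)\left(y^{2h+1}-x^{2h+1}\right)$, the identity $(\ast)$ reads $P=Q$ and the new identity reads $zP=xQ$. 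Hence, if neither side of $(\ast)$ vanishes, then $P=Q\ne 0$ forces $z=x$, contradicting that $x,y,z$ are pairwise distinct.

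Therefore both sides of $(\ast)$ must be zero, i.e.\ $P=Q=0$. From $P=0$ I get $x^{2h+2}=y^{2h+2}$ or $z^{2h+1}=y^{2h+1}$, and from $Q=0$ I get $z^{2h+2}=y^{2h+2}$ or $x^{2h+1}=y^{2h+1}$. The closing step is a short case analysis resting on the remark that, for distinct $u,v\in U_{q+1}$, one cannot have both $u^{2h+1}=v^{2h+1}$ and $u^{2h+2}=v^{2h+2}$: dividing these would give $u=v$ since $\gcd(2h+1,2h+2)=1$. This rules out the two mixed cases and leaves only $x^{2h+2}=y^{2h+2}=z^{2h+2}$ or $x^{2h+1}=y^{2h+1}=z^{2h+1}$, as claimed.

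I expect the main obstacle to be the Frobenius-and-clear-denominators computation: the factors have to come out cross-paired (an $x$-factor multiplying a $z$-factor inside $P$, and symmetrically inside $Q$), so that $(\ast)$ and its $q$-th power collapse to $z=x$ rather than to a weaker relation such as a square identity; a careless regrouping would break the contradiction. Once the coprimality remark is in place, the remaining case distinction is entirely routine.
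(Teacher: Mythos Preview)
Your proposal is correct and follows exactly the template the paper intends: it is the verbatim analogue of the proof of \cref{sec3 lem2} with the exponent pair $(2h,2h+1)$ shifted to $(2h+1,2h+2)$, which is precisely why the paper omits this proof as ``similar.'' Your care about the cross-pairing in the Frobenius step and the coprimality remark for the closing case analysis are exactly the points that make the argument go through.
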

	
	We will also require the following lemma.
	\begin{lemma}
		\label{sec3 lem4}
		Suppose that $x,y,z$ are three pairwise distinct elements in $U_{q+1}$ and $k>2$ is an integer. If $x^{k}=y^{k}=z^{k}$, then $\gcd(k,q+1)>2$.
	\end{lemma}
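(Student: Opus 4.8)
The plan is to exploit the fact that $U_{q+1}$ is a cyclic group of order $q+1$ under multiplication (it is the group of $(q+1)$-th roots of unity in $\mathbb{F}_{q^2}^{\ast}$). The condition $x^{k}=y^{k}=z^{k}$ is multiplicative, so the natural first move is to normalize by dividing through by $y^{k}$. Setting $u=x/y$ and $v=z/y$, both $u,v\in U_{q+1}$ and the hypothesis becomes $u^{k}=v^{k}=1$. Thus $u$ and $v$ are simultaneously $(q+1)$-th roots of unity and $k$-th roots of unity.

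Next I would identify the subgroup in which $u$ and $v$ live. The set
\begin{equation*}
	H=\left\{w\in U_{q+1}: w^{k}=1\right\}=\left\{w\in\mathbb{F}_{q^2}: w^{q+1}=1\ \text{and}\ w^{k}=1\right\}
\end{equation*}
consists exactly of the common roots of $w^{q+1}-1$ and $w^{k}-1$, which are precisely the solutions of $w^{\gcd(k,q+1)}=1$. Since $U_{q+1}$ is cyclic of order $q+1$, this is the unique subgroup of order $d:=\gcd(k,q+1)$, so $|H|=d$.

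Finally I would count. Because $x,y,z$ are pairwise distinct, the three elements $1=y/y$, $u=x/y$, and $v=z/y$ are pairwise distinct as well: $u\neq 1$ follows from $x\neq y$, $v\neq 1$ from $z\neq y$, and $u\neq v$ from $x\neq z$. All three lie in $H$, hence $|H|\ge 3$, which gives $\gcd(k,q+1)=d\ge 3>2$, as claimed.

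This argument is entirely group-theoretic and short; there is no substantive obstacle. The only point requiring a moment of care is the claim that the $k$-th roots of unity inside $U_{q+1}$ form a subgroup of order exactly $\gcd(k,q+1)$, which is the standard fact that in a cyclic group of order $q+1$ the solutions of $w^{k}=1$ number $\gcd(k,q+1)$. Note also that the hypothesis $k>2$ is not actually needed for this conclusion; the distinctness of $x,y,z$ alone forces $|H|\ge 3$.
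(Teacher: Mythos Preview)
Your proof is correct and follows essentially the same approach as the paper: normalize by dividing through by $y$, observe that $1$, $x/y$, $z/y$ are three pairwise distinct elements of $U_{q+1}$ satisfying $w^{k}=1$, and conclude that $\gcd(k,q+1)\ge 3$. The paper phrases the last step in terms of roots of $t^{k}-1$ in $U_{q+1}$ rather than the order of a subgroup, but the argument is the same; your remark that the hypothesis $k>2$ is unnecessary is also correct.
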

	\begin{proof}
		If $x^{k}=y^{k}=z^{k}$, then we have $\left(\frac{x}{y}\right)^k=\left(\frac{z}{y}\right)^k=1$. Since $x,y,z$ are pairwise distinct, $\frac{x}{y},\frac{z}{y}$ and 1 are all different from each other. Hence the polynomial $t^k-1\in\mathbb{F}_{q^2}[t]$ have at least three roots in $U_{q+1}$. That is, $\gcd(k,q+1)\ge3$.
		This completes the proof.
	\end{proof}
	
	Denote the minimum distance of $\mathcal{C}_{(q,q+1,3,h)}$ by $d$. We are now ready to prove the following result about the sufficient and necessary condition for $d=3$.
	\begin{theorem}\label{sec3 thm1}
		The minimum distance $d$ of $\mathcal{C}_{(q,q+1,3,h)}$ is equal to 3 if and only if $\gcd(2h+1,q+1)>1$. Equivalently, $4 \le d \le 5$ if and only if $\gcd(2h+1,q+1)=1$.
	\end{theorem}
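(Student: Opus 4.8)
The plan is to translate the existence of a weight-$3$ codeword into a solvability question over $\mathbb{F}_q$ and then feed it into the determinant lemmas. Writing a codeword as $c(x)=\sum_j a_j x^{p_j}$, membership in $\mathcal{C}_{(q,q+1,3,h)}$ is equivalent to $c(\beta^h)=c(\beta^{h+1})=0$. Hence a weight-$3$ codeword exists if and only if there are three pairwise distinct $x,y,z\in U_{q+1}$ and $a,b,c\in\mathbb{F}_q^{\ast}$ with $ax^h+by^h+cz^h=0$ and $ax^{h+1}+by^{h+1}+cz^{h+1}=0$. Since the BCH bound already gives $d\ge 3$, the theorem reduces to showing that such a configuration exists iff $\gcd(2h+1,q+1)>1$. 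I would record at the outset that $2h+1$ is odd, so $\gcd(2h+1,q+1)$ is always odd; thus ``$>1$'' is the same as ``$\ge 3$'', which is exactly what a root-of-unity count will produce.

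For the forward implication ($d=3\Rightarrow\gcd>1$) I would exploit that $a,b,c\in\mathbb{F}_q$ together with $u^q=u^{-1}$ for $u\in U_{q+1}$. Applying the Frobenius $t\mapsto t^q$ to the two defining equations yields $ax^{-h}+by^{-h}+cz^{-h}=0$ and $ax^{-(h+1)}+by^{-(h+1)}+cz^{-(h+1)}=0$. Consequently $(a,b,c)$ is a nonzero vector lying in the kernels of \emph{both} $3\times 3$ matrices appearing in \cref{sec3 lem2} and \cref{sec3 lem3}, so both determinants vanish. By \cref{sec3 lem2}, $x^{2h+1}=y^{2h+1}=z^{2h+1}$ or $x^{2h}=y^{2h}=z^{2h}$; by \cref{sec3 lem3}, $x^{2h+1}=y^{2h+1}=z^{2h+1}$ or $x^{2h+2}=y^{2h+2}=z^{2h+2}$. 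The key step is to eliminate the mixed branch: if the common $(2h+1)$-st powers do not coincide, then $x^{2h}=y^{2h}=z^{2h}$ and $x^{2h+2}=y^{2h+2}=z^{2h+2}$ hold simultaneously, and dividing forces $x^2=y^2=z^2$, which is impossible for three distinct elements of $U_{q+1}$ since the squaring map is at most two-to-one. Thus $x^{2h+1}=y^{2h+1}=z^{2h+1}$, and since this requires $x,y,z$ distinct we must have $2h+1\ge 3$; \cref{sec3 lem4} then gives $\gcd(2h+1,q+1)>2>1$.

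For the converse ($\gcd>1\Rightarrow d=3$) I would construct a codeword directly. As $\gcd(2h+1,q+1)\ge 3$, the subgroup $G=\{u\in U_{q+1}:u^{2h+1}=1\}$ has order $\gcd(2h+1,q+1)\ge 3$, so I may pick three distinct $x,y,z\in G$. Because $u^{2h+1}=1$ gives $u^{h+1}=u^{-h}$, the second defining equation becomes the $q$-th power of the first once $a,b,c\in\mathbb{F}_q$; hence it suffices to solve the single equation $ax^h+by^h+cz^h=0$ over $\mathbb{F}_{q^2}$ with $a,b,c\in\mathbb{F}_q^{\ast}$. Viewing this as an $\mathbb{F}_q$-linear map $\mathbb{F}_q^3\to\mathbb{F}_{q^2}$, I would check that any two of $x^h,y^h,z^h$ are $\mathbb{F}_q$-independent: if $(x/y)^h\in\mathbb{F}_q$ then $(x/y)^{2h}=1$, which together with $(x/y)^{2h+1}=1$ forces $x=y$. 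Therefore the map has rank $2$, its kernel is one-dimensional, and the same independence shows a spanning kernel vector can have no zero coordinate, so it lies in $(\mathbb{F}_q^{\ast})^3$ and yields a weight-$3$ codeword. With the BCH bound this gives $d=3$.

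Finally, for the reformulation ``$4\le d\le 5\Leftrightarrow\gcd(2h+1,q+1)=1$'': the equivalence of ``$d\ge 4$'' with ``$\gcd=1$'' is the contrapositive of what was just proved, and the upper bound follows from the Singleton bound together with \cref{sec3 thm0}, since $k\ge q-3$ forces $d\le n-k+1\le(q+1)-(q-3)+1=5$. I expect the branch-elimination in the forward direction to be the main obstacle, as it is precisely where both determinant identities (rather than just one) are needed and where the geometric fact that three distinct roots of unity cannot share a square is used.
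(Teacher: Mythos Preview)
Your proof is correct. The backward direction ($d=3\Rightarrow\gcd(2h+1,q+1)>1$) and the final reformulation match the paper's argument essentially step for step: apply Frobenius to obtain the negative-exponent relations, feed $(a,b,c)$ into both $3\times3$ matrices, invoke \cref{sec3 lem2,sec3 lem3}, eliminate the mixed branch via $x^2=y^2=z^2$ (the paper phrases this as $\gcd(2h,2h+2)=2$ forcing $x/y=z/y=-1$), and finish with \cref{sec3 lem4}.

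The forward construction ($\gcd>1\Rightarrow d=3$) is where you diverge. The paper builds an explicit weight-$3$ codeword: it fixes $x\in V\setminus\{1\}$ with $x^{2h+1}=1$, shows $1,x,x^{h+1}$ are pairwise distinct, and writes down the coefficients $(1,1,-(1+x^h)/x^{h(h+1)})$, verifying by hand that the third entry lies in $\mathbb{F}_q$ and that both parity checks vanish. Your argument instead takes any three distinct $x,y,z$ in the subgroup $G=\{u:u^{2h+1}=1\}$, observes that on $G$ the second parity check is the Frobenius image of the first, and then uses a clean $\mathbb{F}_q$-dimension count: the map $(a,b,c)\mapsto ax^h+by^h+cz^h$ from $\mathbb{F}_q^3$ to $\mathbb{F}_{q^2}$ has rank exactly $2$ because any two of $x^h,y^h,z^h$ are $\mathbb{F}_q$-independent (your computation $(x/y)^h\in\mathbb{F}_q\Rightarrow(x/y)^{2h}=1\Rightarrow x=y$ is correct), so the one-dimensional kernel is spanned by a vector with no zero coordinate. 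This is more conceptual and avoids the somewhat delicate verification that $-(1+x^h)/x^{h(h+1)}\in\mathbb{F}_q$; the paper's explicit formula, on the other hand, would be preferable if one later wanted to write down actual codewords or count them.
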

	\begin{proof}
		Let $\alpha$ be a generator of $\mathbb{F}_{q^2}^{\ast}$ and $\beta=\alpha^{q-1}$. Then $\beta$ is a primitive $(q+1)$-th root of unity in $\mathbb{F}_{q^2}$. Define
		\begin{equation*}
			\left.H=\left[\begin{array}{lllll}1&\beta^{h}&(\beta^{h})^2&\cdots&(\beta^{h})^q\\1&\beta^{h+1}&(\beta^{h+1})^2&\cdots&(\beta^{h+1})^q\end{array}\right.\right].
		\end{equation*}
		It is easily seen that $H$ is a parity-check matrix of $\mathcal{C}_{(q,q+1,3,h)}$.
		
		Assume that $\gcd(2h+1,q+1)>1$, then $\gcd(2h+1,q+1)\ge3$ as $2h+1$ is odd. Hence the polynomial $t^{2h+1}-1\in\mathbb{F}_{q^2}[t]$ have at least three roots in $U_{q+1}$. Denote the set of all such roots by $V$. Suppose that $x\in V\setminus\{1\}$. Next we will show that $x^{h+1}\in V\setminus\{1,x\}$. It is obvious that $x^{h+1}\in V$. We only need to prove $x^{h+1}\ne1$ and $x^{h+1}\ne x$. On the contrary, assume that $x^{h+1}=1$. Then we have $x^{h+1}=x^{2h+1}=1$. Since $\gcd(h+1,2h+1)=1$, thus $x=1$, a contradiction to $x\in V\setminus\{1\}$. If $x^{h+1}= x$, then we have $x^{h}=x^{2h+1}=1$. Since $\gcd(h,2h+1)=1$,  the only possible value for $x$ is 1, which also gives a contradiction. Therefore, $1,x,x^{h+1}$ are three pairwise distinct elements in $V$. Without loss of generality, assume that $x=\beta^{l_1}$ and $x^{h+1}=\beta^{l_2}$, where $0<l_1,l_2\le q$. Define a vector $\mathbf{c}=(c_0,c_1,...c_q)$, where
		\begin{equation*}
			\left.c_i=\left\{\begin{array}{ll}1&\text{if}~i\in\left\{0,l_1\right\},\\-\frac{1+x^h}{x^{h(h+1)}}&\text{if}~i=l_2,\\0&\text{otherwise}.\end{array}\right.\right.
		\end{equation*}
		Note that 
		\begin{equation*}
			\left(\frac{1+x^h}{x^{h(h+1)}}\right)^q=\frac{1+x^{-h}}{(x^{h+1})^{-h}}=\frac{(x^{h+1})^{h}(1+x^{h})}{x^h}=\frac{1+x^{h}}{(x^{-h})^{h}}=\frac{1+x^{h}}{x^{h(h+1)}},
		\end{equation*}
		where last identity is valid because $x^{2h+1}=1$. Hence $-\frac{1+x^h}{x^{h(h+1)}}\in\mathbb{F}_q$, and so $\mathbf{c}\in\mathbb{F}_q^{q+1}$. Next we will demonstrate that $\mathbf{c}\in\mathcal{C}_{(q,q+1,3,h)}$. 
		It is easy to check that 
		\begin{equation}\label{sec3 thm1 equ1}
			1+x^h-\frac{1+x^h}{x^{h(h+1)}}x^{h(h+1)}=0.
		\end{equation}
		Raising to the $q$-th power both sides of this equation yields 
		\begin{equation}\label{sec3 thm1 equ2}
			1+x^{-h}-\frac{1+x^h}{x^{h(h+1)}}x^{-h(h+1)}=1+x^{h+1}-\frac{1+x^h}{x^{h(h+1)}}x^{(h+1)(h+1)}=0.
		\end{equation}
		Recall that $H$ is a parity-check matrix of $\mathcal{C}_{(q,q+1,3,h)}$, then
		\begin{equation*}
			\begin{aligned}
				\mathbf{c}H^T=&\left[\begin{array}{ll}
					1+(\beta^{l_1})h-\frac{1+x^h}{x^{h(h+1)}}(\beta^{l_2})^{h}, & 1+(\beta^{l_1})^{h+1}-\frac{1+x^h}{x^{h(h+1)}}(\beta^{l_2})^{h+1}
				\end{array}\right]\\
				=&\left[\begin{array}{ll}
					1+x^h-\frac{1+x^h}{x^{h(h+1)}}x^{h(h+1)}, & 1+x^{h+1}-\frac{1+x^h}{x^{h(h+1)}}x^{(h+1)(h+1)}
				\end{array}\right]\\
				=&\left[\begin{array}{ll}
					0, & 0
				\end{array}\right]
			\end{aligned}			
		\end{equation*}
		because of \cref{sec3 thm1 equ1,sec3 thm1 equ2}. That is, $\mathbf{c}$ is a codeword of $\mathcal{C}_{(q,q+1,3,h)}$. Consequently, $d=3$.
		
		Conversely, assume that $d={3}$. Then there are three pairwise distinct elements $x, y, z \in U_{q+1}$ such that
		\begin{equation}
			\label{sec3 thm1 equ3}
			\left.i\left[\begin{array}{l}x^{h}\\x^{h+1}\end{array}\right.\right]+j\left[\begin{array}{l}y^{h}\\y^{h+1}\end{array}\right]+k\left[\begin{array}{l}z^{h}\\z^{h+1}\end{array}\right]=\mathbf{0},
		\end{equation}
		where $i,j,k \in \mathbb{F}_{q}^\ast$. Raising to the $q$-th power both sides of \cref{sec3 thm1 equ3} yields 
		\begin{equation}
			\label{sec3 thm1 equ4}
			\left.i\left[\begin{array}{l}x^{-h}\\x^{-(h+1)}\end{array}\right.\right]+j\left[\begin{array}{l}y^{-h}\\y^{-(h+1)}\end{array}\right]+k\left[\begin{array}{l}z^{-h}\\z^{-(h+1)}\end{array}\right]=\mathbf{0},
		\end{equation}
		Combine \cref{sec3 thm1 equ3,sec3 thm1 equ4}, we get
		\begin{equation*}
			\left.i\left[\begin{array}{l}x^{-h}\\x^h\\x^{h+1}\end{array}\right.\right]+j\left[\begin{array}{l}y^{-h}\\y^h\\y^{{h+1}}\end{array}\right]+k\left[\begin{array}{l}z^{-h}\\z^h\\z^{{h+1}}\end{array}\right]=\mathbf{0}
		\end{equation*}
		and
		\begin{equation*}
			\left.i\left[\begin{array}{l}x^{-(h+1)}\\x^h\\x^{h+1}\end{array}\right.\right]+j\left[\begin{array}{l}y^{-(h+1)}\\y^h\\y^{{h+1}}\end{array}\right]+k\left[\begin{array}{c}z^{-(h+1)}\\z^h\\z^{{h+1}}\end{array}\right]=\mathbf{0}.
		\end{equation*}
		Then we have
		\begin{equation*}
			\left|\begin{array}{lll}
				x^{-h} & y^{-h} & z^{-h}\\ x^h & y^h & z^h\\ x^{h+1} & y^{h+1} & z^{h+1}
			\end{array}\right|
			=\left|\begin{array}{lll}x^{-(h+1)} & y^{-(h+1)} & z^{-(h+1)}\\ x^h & y^h & z^h\\ x^{h+1} & y^{h+1} & z^{h+1}\end{array}\right|
			=0.
		\end{equation*}
		According to \cref{sec3 lem2,sec3 lem3}, we have either $x^{2h+1} = y^{2h+1} = z^{2h+1}$, or $x^{2h} = y^{2h} = z^{2h}$ and  $x^{2h+2} = y^{2h+2} = z^{2h+2}$. If the latter holds, then $\left(\frac{x}{y}\right)^{2h}=\left(\frac{z}{y}\right)^{2h}=\left(\frac{x}{y}\right)^{2h+2}=\left(\frac{z}{y}\right)^{2h+2}=1$. Since $\gcd(2h,2h+2)=2$, $\frac{x}{y}\ne1$ and $\frac{z}{y}\ne1$, we have $\frac{x}{y}=\frac{z}{y}=-1$, which would contradicts $x\ne z$. Therefore we have $x^{2h+1} = y^{2h+1} = z^{2h+1}$, which implies $\gcd(2h+1,q+1)>1$ from \cref{sec3 lem4}. The proof of the first part is completed. 
		
		According to the first part, $d \ne 3$ if and only if $\gcd(2h+1,q+1)=1$. Combine the BCH bound and Singleton bound, we have $4 \le d \le 5$ if and only if $d\ne3$. This completes the proof.
	\end{proof}
	
	Combine \cref{sec3 thm0,sec3 thm1}, we have the following corollary about MDS codes.
	\begin{corollary}\label{sec3 cor1}
		(1). For $q$ even, if $h=\frac{q}{2}$, then $\mathcal{C}_{(q,q+1,3,h)}$ is an MDS code with parameters $[q+1,q-1,3]$.
		
		(2). For $q$ even, if $h=0\ \mathrm{or}\ q$, then $\mathcal{C}_{(q,q+1,3,h)}$ is an MDS code with parameters $[q+1,q-2,4]$.
		
		(3). For $q$ odd, if $h=0,\frac{q-1}{2},\frac{q+1}{2}\ \mathrm{or}\ q$, then $\mathcal{C}_{(q,q+1,3,h)}$ is an MDS code with parameters $[q+1,q-2,4]$.
	\end{corollary}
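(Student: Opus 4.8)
The plan is to treat this as a direct corollary: I would read off the dimension of each code from \cref{sec3 thm0}, determine the minimum distance by combining the dichotomy of \cref{sec3 thm1} with the Singleton bound, and then verify that $d = n-k+1$ holds in each case to certify the MDS property. Throughout, $n = q+1$, so the only arithmetic input needed is the value of $\gcd(2h+1,q+1)$ at the specific $h$ listed in each item.

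For item (1), $q$ is even and $h = q/2$, so \cref{sec3 thm0} gives $k = q-1$. Here $2h+1 = q+1$, whence $\gcd(2h+1,q+1) = q+1 > 1$, and \cref{sec3 thm1} yields $d = 3$. Since $n - k + 1 = (q+1)-(q-1)+1 = 3 = d$, the code is MDS with parameters $[q+1,q-1,3]$. (One could also bypass \cref{sec3 thm1} here: the designed distance $3$ forces $d \ge 3$ by the BCH bound, while the Singleton bound forces $d \le 3$, so $d=3$ automatically.)

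For items (2) and (3), \cref{sec3 thm0} gives $k = q-2$ at every listed $h$, so the Singleton bound reads $d \le n-k+1 = 4$. It therefore suffices to show $d \ge 4$, equivalently $d \ne 3$, which by \cref{sec3 thm1} amounts to $\gcd(2h+1,q+1) = 1$. I would verify this case by case: for $h=0$ we have $2h+1=1$; for $h=q$ we have $2h+1 = 2(q+1)-1$; for $q$ odd and $h=(q-1)/2$ we have $2h+1 = q = (q+1)-1$; and for $h=(q+1)/2$ we have $2h+1 = q+2 = (q+1)+1$. In every instance $2h+1 \equiv \pm 1 \pmod{q+1}$, so the gcd is $1$ and hence $d \ne 3$. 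Combined with $d \le 4$ this pins down $d = 4$, yielding MDS codes with parameters $[q+1,q-2,4]$.

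Since this statement is a bookkeeping consequence of the two preceding results, there is no serious obstacle. The only point requiring care is that, in items (2) and (3), neither bound alone determines $d$: \cref{sec3 thm1} only guarantees $4 \le d \le 5$, and the Singleton bound only guarantees $d \le 4$. It is precisely the conjunction of the lower bound $d \ge 4$ coming from \cref{sec3 thm1} with the upper bound $d \le 4$ coming from the dimension $k = q-2$ that forces $d = 4$, and hence the MDS property.
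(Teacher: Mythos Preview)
Your proposal is correct and follows essentially the same approach as the paper: read off the dimension from \cref{sec3 thm0}, compute $\gcd(2h+1,q+1)$ at each listed $h$, invoke \cref{sec3 thm1} to determine or bound $d$, and use the Singleton bound to pin down $d=4$ in parts (2) and (3). Your unifying remark that $2h+1 \equiv \pm 1 \pmod{q+1}$ in all the $k=q-2$ cases is a clean way to package the gcd checks, and your parenthetical alternative for part (1) (BCH bound plus Singleton bound directly) is a valid shortcut the paper does not mention.
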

	\begin{proof}
		Suppose that the dimension and minimum distance of $\mathcal{C}_{(q,q+1,3,h)}$ is $k$ and $d$, respectively. 
		
		(1). If $q$ is even and $h=\frac{q}{2}$, it follows from \cref{sec3 thm0} that $k=q-1$. Since $\gcd(2h+1,q+1)=q+1>1$, \cref{sec3 thm1} leads to $d=3$.
		
		(2). If $q$ is even and $h=0\ \mathrm{or}\ q$, by \cref{sec3 thm0}, we have $k=q-2$. Note that $\gcd(1,q+1)=\gcd(2q+1,q+1)=1$, thus $d\ge4$ in accordance with \cref{sec3 thm1}. Then the Singleton bound yields $d\le (q+1)-(q-2)+1=4$, i.e. $d=4$.
		
		(3). If $q$ is odd and $h=0,\frac{q-1}{2},\frac{q+1}{2}\ \mathrm{or}\ q$, we get $k=q-2$ from \cref{sec3 thm0}. One can effortlessly verify that $\gcd(1,q+1)=\gcd(q,q+1)=\gcd(q+2,q+1)=\gcd(2q+1,q+1)=1$. Through \cref{sec3 thm1} and the Singleton bound, we deduce that $d=4$.
		
		It is obvious that the codes in these cases meet the Singleton bound, hence all of these codes are MDS. The proof is completed.
	\end{proof}
	
	If $h\notin\{0,\frac{q-1}{2},\frac{q}{2},\frac{q+1}{2},q\}$, then the dimension of $\mathcal{C}_{(q,q+1,3,h)}$ is $q-3$ by \cref{sec3 thm0}. Thus we have the following corollary which is straightforward from \cref{sec3 thm1}.
	\begin{corollary}\label{sec3 cor2}
		Suppose that $h\notin\{0,\frac{q-1}{2},\frac{q}{2},\frac{q+1}{2},q\}$. Then the BCH code $\mathcal{C}_{(q,q+1,3,h)}$ has parameters $[q+1,q-3,3]$ if and only if $\gcd(2h+1,q+1)>1$. Equivalently, $\mathcal{C}_{(q,q+1,3,h)}$ has parameters $[q+1,q-3,d]$, where $4 \le d \le 5$ if and only if $\gcd(2h+1,q+1)=1$.
	\end{corollary}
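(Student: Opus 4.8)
The plan is to obtain this statement as an immediate consequence of \cref{sec3 thm0,sec3 thm1}, by stacking the dimension computation onto the minimum-distance dichotomy. The only genuine work is to verify that the exclusion set $\{0,\frac{q-1}{2},\frac{q}{2},\frac{q+1}{2},q\}$ is precisely the set whose complement places $h$ in the dimension-$(q-3)$ branch of \cref{sec3 thm0}, uniformly in the parity of $q$.

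First I would dispose of the dimension. As already noted just before the statement, \cref{sec3 thm0} yields $k=q-3$ under the hypothesis, and I would make this precise by splitting on parity. When $q$ is even, $\frac{q-1}{2}$ and $\frac{q+1}{2}$ are not integers and hence impose no constraint, so the hypothesis reduces to $h\notin\{0,\frac{q}{2},q\}$, which is exactly the ``otherwise'' case of \cref{sec3 thm0} giving $k=q-3$. When $q$ is odd, $\frac{q}{2}$ is not an integer and drops out, leaving $h\notin\{0,\frac{q-1}{2},\frac{q+1}{2},q\}$, again the ``otherwise'' case with $k=q-3$. Thus in both cases the code has length $q+1$ and dimension $q-3$.

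It then remains only to read off the minimum distance. Since $n=q+1$ and $k=q-3$ are now fixed, the parameters are $[q+1,q-3,d]$, and \cref{sec3 thm1} determines $d$ entirely through $\gcd(2h+1,q+1)$: if $\gcd(2h+1,q+1)>1$ then $d=3$, giving $[q+1,q-3,3]$; and if $\gcd(2h+1,q+1)=1$ then $4\le d\le 5$, giving $[q+1,q-3,d]$ with $4\le d\le 5$. Reading each implication in both directions yields the two stated equivalences. The main obstacle is therefore essentially absent: no new estimate or construction is needed beyond the two cited theorems, and the single point demanding care is the parity bookkeeping in the first step, ensuring the five excluded values collapse correctly to the relevant three or four for each parity.
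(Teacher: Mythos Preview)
Your proposal is correct and matches the paper's approach exactly: the paper states the corollary as ``straightforward from \cref{sec3 thm1}'' after noting that \cref{sec3 thm0} gives dimension $q-3$ under the stated exclusion on $h$. Your parity bookkeeping is a slight elaboration but changes nothing substantive.
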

	
	\begin{remark}
		We can infer from \cref{sec3 cor2} that the code $\mathcal{C}_{(q,q+1,3,h)}$ must be either an AMDS code or an MDS code when $h\notin\{0,\frac{q-1}{2},\frac{q}{2},\frac{q+1}{2},q\}$ and $\gcd(2h+1,q+1)=1$. It should be noticed that the case of $h=1$ was shown in \cite{Ding2020InfiniteFamiliesMDS}, which indicates that $\mathcal{C}_{(q,q+1,3,1)}$ has parameters $[q+1,q-3,d]$, where $4\le d\le 5$ if and only if 3 does not divide $q+1$. Actually, $3\nmid q+1$ is equivalent to $\gcd(3,q+1)=1$.
	\end{remark}
	
	Before establishing the necessary and sufficient conditions for the minimum distance of $\mathcal{C}_{(q,q+1,3,h)}$ to be 4, we need the following two lemmas. 
	\begin{lemma}\label{sec3 lem5}
		Define
		\begin{equation}\label{sec3 lem5 equ1}
			D(x,y)=\left|\begin{array}{ll}
				x^h & y^h\\
				x^{h+1} & y^{h+1}
			\end{array}\right|=x^hy^h(y-x)
		\end{equation}
		where $x,y\in\mathbb{F}_{q^2}$. If $x,y\in U_{q+1}$, then $D(x,y)^q=-x^{-2h-1}y^{-2h-1}D(x,y)$.
	\end{lemma}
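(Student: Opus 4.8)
The plan is to separate the two assertions in the statement: the closed form of the determinant, which needs no hypothesis, and the conjugation identity, which is where the membership $x,y\in U_{q+1}$ actually enters.

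First I would verify $D(x,y)=x^hy^h(y-x)$ by a direct cofactor expansion: $D(x,y)=x^h y^{h+1}-x^{h+1}y^h$, and pulling out the common factor $x^hy^h$ gives $x^hy^h(y-x)$. This holds for arbitrary $x,y\in\mathbb{F}_{q^2}$, so the roots-of-unity hypothesis is not used at this stage; it records the factored form that the second part will act on.

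The substantive step exploits the defining property of $U_{q+1}$. For $x,y\in U_{q+1}$ one has $x^{q+1}=y^{q+1}=1$, equivalently $x^q=x^{-1}$ and $y^q=y^{-1}$. Since $q$ is a power of the characteristic $p$, the $q$-th power map is a field homomorphism on $\mathbb{F}_{q^2}$, hence additive, so $(y-x)^q=y^q-x^q$. Applying it to the factored form gives
\[
D(x,y)^q=(x^h)^q(y^h)^q(y-x)^q=x^{-h}y^{-h}\bigl(y^{-1}-x^{-1}\bigr).
\]
Rewriting $y^{-1}-x^{-1}=(x-y)/(xy)=-(y-x)/(xy)$ then yields $D(x,y)^q=-x^{-h-1}y^{-h-1}(y-x)$.

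Finally I would match this against the claimed right-hand side: $-x^{-2h-1}y^{-2h-1}D(x,y)=-x^{-2h-1}y^{-2h-1}\,x^hy^h(y-x)=-x^{-h-1}y^{-h-1}(y-x)$, which coincides with the expression just obtained, completing the argument. There is no genuine obstacle here: the lemma is essentially bookkeeping of exponents, and the only point requiring attention is the single minus sign coming from $y^{-1}-x^{-1}=-(y-x)/(xy)$ together with the identity $x^q=x^{-1}$ valid on $U_{q+1}$, which is precisely the ingredient that turns the conjugate of $D$ into a scalar multiple of $D$ itself.
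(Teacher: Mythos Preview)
Your proof is correct and follows essentially the same approach as the paper: apply the Frobenius to the factored form $x^hy^h(y-x)$, use $x^q=x^{-1}$, $y^q=y^{-1}$ on $U_{q+1}$, rewrite $y^{-1}-x^{-1}=-(y-x)/(xy)$, and then reinsert a factor $x^hy^h$ to recognize $D(x,y)$ on the right.
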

	\begin{proof}
		If $x,y\in U_{q+1}$, then
		\begin{equation*}
			\begin{aligned}
				&D(x,y)^q=x^{qh}y^{qh}(y^q-x^q)=x^{-h}y^{-h}(y^{-1}-x^{-1})\\
				=&-x^{-h-1}y^{-h-1}(y-x)=-x^{-2h-1}y^{-2h-1}x^hy^h(y-x)\\
				=&-x^{-2h-1}y^{-2h-1}D(x,y).
			\end{aligned}
		\end{equation*}
	\end{proof}
	
	\begin{lemma}\label{sec3 lem6}
		Define
		\begin{equation}\label{sec3 lem6 equ1}
			E(x,y)=\frac{x^{2h+1}-y^{2h+1}}{x-y}
		\end{equation}
		where $x,y\in\mathbb{F}_{q^2}$. If $x,y\in U_{q+1}$, then $E(x,y)^q=x^{-2h}y^{-2h}E(x,y)$.
	\end{lemma}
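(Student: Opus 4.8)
The plan is to parallel the proof of \cref{sec3 lem5}, exploiting the fact that raising to the $q$-th power is the Frobenius automorphism of $\mathbb{F}_{q^2}$ over $\mathbb{F}_q$ and hence respects sums, differences, and quotients. First I would write, for $x\neq y$,
$$E(x,y)^q=\frac{(x^{2h+1})^q-(y^{2h+1})^q}{x^q-y^q},$$
and then substitute the key relations $x^q=x^{-1}$ and $y^q=y^{-1}$, which hold precisely because $x,y\in U_{q+1}$ satisfy $x^{q+1}=y^{q+1}=1$. After this substitution the numerator becomes $x^{-(2h+1)}-y^{-(2h+1)}$ and the denominator $x^{-1}-y^{-1}$.

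The next step is to clear the negative exponents: rewriting the numerator as $\frac{y^{2h+1}-x^{2h+1}}{x^{2h+1}y^{2h+1}}$ and the denominator as $\frac{y-x}{xy}$, and then forming their quotient. The factor $\frac{y^{2h+1}-x^{2h+1}}{y-x}$ reassembles into $E(x,y)$ (the two sign reversals cancel), while the leftover monomials collapse to $\frac{xy}{x^{2h+1}y^{2h+1}}=x^{-2h}y^{-2h}$. Multiplying these yields the claimed identity $E(x,y)^q=x^{-2h}y^{-2h}E(x,y)$.

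I do not anticipate a genuine obstacle here, as the computation is routine finite-field algebra of the same flavor as \cref{sec3 lem5}. The only point deserving a little care is the status of the expression when $x=y$, since $E(x,y)$ is really the polynomial $\sum_{i=0}^{2h}x^i y^{2h-i}$ and the ``Frobenius of a quotient'' manipulation is not literally valid there. I would either restrict the division argument to $x\neq y$ (which suffices for the intended applications, where $x,y,z$ are taken pairwise distinct) or, to settle $x=y$ directly, evaluate both sides: $E(x,x)=(2h+1)x^{2h}$, and then $E(x,x)^q=(2h+1)x^{-2h}=x^{-4h}(2h+1)x^{2h}=x^{-2h}y^{-2h}E(x,y)\big|_{y=x}$, using that $2h+1\in\mathbb{F}_p$ is fixed by the $q$-th power and $x^q=x^{-1}$.
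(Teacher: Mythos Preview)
Your proof is correct and follows essentially the same computation as the paper: apply Frobenius to the quotient, use $x^q=x^{-1}$ and $y^q=y^{-1}$, and simplify to recover $x^{-2h}y^{-2h}E(x,y)$. Your extra remark handling the degenerate case $x=y$ via the polynomial form $E(x,x)=(2h+1)x^{2h}$ is a nice touch that the paper omits, but is not needed for the later applications.
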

	\begin{proof}
		If $x,y\in U_{q+1}$, then
		\begin{equation*}
			\begin{aligned}
				E(x,y)^q=\frac{x^{-2h-1}-y^{-2h-1}}{x^{-1}-y^{-1}}=\frac{x^{-2h-1}y^{-2h-1}(x^{2h+1}-y^{2h+1})}{x^{-1}y^{-1}(x-y)}=x^{-2h}y^{-2h}E(x,y).
			\end{aligned}
		\end{equation*}
	\end{proof}
		
	Now the following theorem provides the necessary and sufficient conditions for the minimum distance of $\mathcal{C}_{(q,q+1,3,h)}$ to be 4.
	\begin{theorem}\label{sec3.1 thm2}
		The minimum distance $d$ of $\mathcal{C}_{(q,q+1,3,h)}$ is equal to 4 if and only if $\gcd(2h+1,q+1)=1$, and there exist four pairwise distinct elements $x,y,z,w\in U_{q+1}$ such that
		\begin{equation}\label{sec3.1 thm2 equ0}
			\frac{E(x,z)}{E(x,w)}=\frac{E(y,z)}{E(y,w)},
		\end{equation}
		where $E(\ast,\ast)$ is defined as \cref{sec3 lem6 equ1}.
	\end{theorem}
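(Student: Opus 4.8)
The plan is to invoke \cref{sec3 thm1} to strip away the distance-$3$ case and then reduce the existence of a weight-$4$ codeword to a single $4\times 4$ determinant that factors into the advertised ratio of $E$-values. First I would note that $\gcd(2h+1,q+1)=1$ is unavoidable: by \cref{sec3 thm1}, $d=4$ forces $d\neq 3$ and hence $\gcd(2h+1,q+1)=1$, while conversely that gcd condition gives $4\le d\le 5$. So I may assume $\gcd(2h+1,q+1)=1$ once and for all, and the task becomes showing that $d=4$ (equivalently, $d\le 4$, i.e. a weight-$4$ codeword exists) holds if and only if \cref{sec3.1 thm2 equ0} is satisfiable over four distinct points of $U_{q+1}$.

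Second, I would translate this into linear algebra using the parity-check matrix $H$ from the proof of \cref{sec3 thm1}. A weight-$4$ codeword on distinct $x,y,z,w\in U_{q+1}$ is a tuple $(a,b,c,d)\in(\mathbb{F}_q^{\ast})^4$ annihilating the two rows of $H$ restricted to these columns. Since $a,b,c,d\in\mathbb{F}_q$ and every $u\in U_{q+1}$ satisfies $u^q=u^{-1}$, applying the $q$-th power to both parity-check equations yields the two conjugate equations with exponents $-h$ and $-(h+1)$. Thus $(a,b,c,d)$ lies in the kernel of the $4\times 4$ matrix $M$ over $\mathbb{F}_{q^2}$ whose column at $u\in\{x,y,z,w\}$ is $(u^{-h},u^{-(h+1)},u^{h},u^{h+1})^{\mathsf T}$.

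Third comes the conceptual heart: passing between $\mathbb{F}_{q^2}$-solvability and $\mathbb{F}_q$-solvability. Coordinatewise $q$-th power permutes the rows of $M$ (it swaps the exponent block $\{h,h+1\}$ with $\{-h,-(h+1)\}$), so the Frobenius image of $M$ is $PM$ for a permutation matrix $P$; hence $\ker M$ is stable under the semilinear Frobenius and, by Galois descent, is spanned by vectors in $\mathbb{F}_q^4$. Consequently $\det M=0$ already guarantees a nonzero solution $(a,b,c,d)\in\mathbb{F}_q^4$. Because $\gcd(2h+1,q+1)=1$ forces $d\ge 4$, any such nonzero codeword supported on at most four positions must have weight exactly $4$, i.e. all coordinates nonzero; the converse (an honest weight-$4$ codeword yields $\det M=0$) is immediate. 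This reduces the theorem to the purely algebraic statement that $\det M=0$ for some four distinct $x,y,z,w\in U_{q+1}$ if and only if \cref{sec3.1 thm2 equ0} holds.

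Finally I would evaluate $\det M$. Factoring $u^{-(h+1)}$ from each column (a nonzero scalar, so the vanishing locus is unchanged) turns $M$, up to reordering rows, into the generalized Vandermonde matrix with exponents $0,1,s,s+1$ where $s=2h+1$. Subtracting the $w$-column, cancelling the differences $(x-w),(y-w),(z-w)$, and using the elementary identity $\frac{u^{s+1}-w^{s+1}}{u-w}=uE(u,w)+w^{s}$ to replace the resulting row of such quotients by the entries $uE(u,w)$ (subtracting $w^{s}$ times the constant row), a short Laplace expansion collapses the determinant, up to a nonzero factor, to $E(x,z)E(y,w)-E(x,w)E(y,z)$. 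Since $\gcd(2h+1,q+1)=1$, the same reasoning as in \cref{sec3 lem4} shows $E(u,v)\ne 0$ for all distinct $u,v\in U_{q+1}$, so this difference vanishes exactly when the ratio \cref{sec3.1 thm2 equ0} does, finishing the equivalence. The two steps I expect to be delicate are the descent argument in the third paragraph—guaranteeing an $\mathbb{F}_q$-rational, full-weight codeword rather than a mere $\mathbb{F}_{q^2}$-kernel vector—and steering the determinant reduction so that it lands precisely on the symmetric combination $E(x,z)E(y,w)-E(x,w)E(y,z)$ rather than some equivalent but unilluminating polynomial.
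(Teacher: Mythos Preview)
Your approach is correct and takes a genuinely different route from the paper. The paper argues both directions by explicit manipulation: in the forward direction it solves the $2\times4$ system for $i,j$ via Cramer's rule in terms of $k,l$, then imposes $i^{q}=i$ and $j^{q}=j$ using \cref{sec3 lem5} to extract two expressions for $-k/l$, whose equality is \cref{sec3.1 thm2 equ0}; in the backward direction it writes down $k,l$ (and then $i,j$) explicitly and verifies by hand that each lies in $\mathbb{F}_q^{\ast}$. Your route replaces all of this with the structural observation that Frobenius permutes the rows of the $4\times4$ matrix $M$, so $\ker M$ is defined over $\mathbb{F}_q$; combined with $d\ge4$ this makes $\det M=0$ equivalent to the existence of a weight-$4$ codeword in one stroke, and both directions of the theorem then follow from the single determinant identity. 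That identity is correct: after extracting $u^{-(h+1)}$ from each column one has, up to sign, $\det M=(x-z)(x-w)(y-z)(y-w)\bigl(E(x,z)E(y,w)-E(x,w)E(y,z)\bigr)$. The cleanest way to see this is not quite the column-subtraction path you sketch (which lands on a three-term expression in the $E(\cdot,w)$'s that still needs work) but rather a $2\times2$ block Laplace expansion along the row pairs with exponents $\{0,1\}$ and $\{s,s+1\}$, together with the Pl\"ucker relation on the $s$-th powers. What your approach buys is a conceptual, reusable argument that avoids the paper's laborious explicit verifications; what the paper's computation offers in return is an explicit formula for the weight-$4$ codeword, which your descent step only shows to exist.
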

	\begin{proof}
		Recall from the proof of \cref{sec3 thm1} that $\alpha$ is a generator of $\mathbb{F}_{q^2}^{\ast}$, $\beta=\alpha^{q-1}$ is a primitive $(q+1)$-th root of unity in $\mathbb{F}_{q^2}$, and
		\begin{equation*}
			\left.H=\left[\begin{array}{lllll}1&\beta^{h}&(\beta^{h})^2&\cdots&(\beta^{h})^q\\1&\beta^{h+1}&(\beta^{h+1})^2&\cdots&(\beta^{h+1})^q\end{array}\right.\right]
		\end{equation*}
		is a parity-check matrix of $\mathcal{C}_{(q,q+1,3,h)}$. Then $d=4$ if and only if there exist four pairwise distinct elements $x,y,z,w\in U_{q+1}$ such that
		\begin{equation}
			\label{sec3.1 thm2 equ1}
			\left.i\left[\begin{array}{l}x^{h}\\x^{h+1}\end{array}\right.\right]+j\left[\begin{array}{l}y^{h}\\y^{h+1}\end{array}\right]-k\left[\begin{array}{l}z^{h}\\z^{h+1}\end{array}\right]-l\left[\begin{array}{l}w^{h}\\w^{h+1}\end{array}\right]=\mathbf{0},
		\end{equation}
		where $i,j,k,l\in\mathbb{F}_q^\ast$. \cref{sec3.1 thm2 equ1} is equivalent to
		\begin{equation}
			\label{sec3.1 thm2 equ2}
			\left.\left[\begin{array}{ll}x^{h} & y^{h}\\x^{h+1} & y^{h+1}\end{array}\right.\right]\left[\begin{array}{l}i\\j\end{array}\right]=k\left[\begin{array}{l}z^{h}\\z^{h+1}\end{array}\right]+l\left[\begin{array}{l}w^{h}\\w^{h+1}\end{array}\right].
		\end{equation}
		
		 Assume that $d=4$. Then there exist four pairwise distinct elements $x,y,z,w\in U_{q+1}$ and $i,j,k,l\in\mathbb{F}_q^\ast$, such that \cref{sec3.1 thm2 equ2} holds. Since $x\ne y$, it follows from \cref{sec3 lem5 equ1} that $D(x,y)\ne0$. An application of Cramer's Rule yields
		 \begin{equation*}
		 	i=\frac{\left|\begin{array}{ll}kz^{h}+lw^{h} & y^{h}\\kz^{h+1}+lw^{h+1}&y^{h+1}\end{array}\right|}{D(x,y)}=\frac{kD(z,y)+lD(w,y)}{D(x,y)}
		 \end{equation*}
		 and
		 \begin{equation*}
		 	j=\frac{\left|\begin{array}{ll}x^{h} &  kz^{h}+lw^{h}\\x^{h+1} & kz^{h+1}+lw^{n+1}\end{array}\right|}{D(x,y)}=\frac{kD(x,z)+lD(x,w)}{D(x,y)}.
		 \end{equation*}
		 Note that $k,l\in \mathbb{F}_q^\ast$, then $k^q=k$ and $l^q=l$. According to \cref{sec3 lem5}, we have
		 \begin{equation*}
		 	i^{q}=k\frac{D(z,y)}{D(x,y)}\left(\frac{x}{z}\right)^{2h+1}+l\frac{D(w,y)}{D(x,y)}\left(\frac{x}{w}\right)^{2h+1}
		 \end{equation*}
		 and 
		 \begin{equation*}
		 	j^{q}=k\frac{D(x,z)}{D(x,y)}\left(\frac{y}{z}\right)^{2h+1}+l\frac{D(x,w)}{D(x,y)}\left(\frac{y}{w}\right)^{2h+1}.
		 \end{equation*}
		 Since $i,j\in\mathbb{F}_q^\ast$, $i^q=i$, which implies
		 \begin{equation*}
		 	k\frac{D(z,y)}{D(x,y)}\left(\frac{x}{z}\right)^{2h+1}+l\frac{D(w,y)}{D(x,y)}\left(\frac{x}{w}\right)^{2h+1}=\frac{kD(z,y)+lD(w,y)}{D(x,y)}.
		 \end{equation*}
		 Rearranging terms, we get
		 \begin{equation}\label{sec3.1 thm2 equ3}
		 	k\frac{D(z,y)}{D(x,y)}\left(\left(\frac{x}{z}\right)^{2h+1}-1\right)=-l\frac{D(w,y)}{D(x,y)}\left(\left(\frac{x}{w}\right)^{2h+1}-1\right).
		 \end{equation}
		 It follows from \cref{sec3 thm1} that $\gcd(2h+1,q+1)=1$. Since $x,y,z,w$ are pairwise distinct, $\left(\frac{x}{z}\right)^{2h+1}-1\ne0$ and $\left(\frac{x}{w}\right)^{2h+1}-1\ne0$. Note that $D(z,y)\ne0$ and $D(w,y)\ne0$. Hence neither side of \cref{sec3.1 thm2 equ3} is equal to 0. Then
		 \begin{equation}\label{sec3.1 thm2 equ4}
		 	\begin{aligned}
		 		&-\frac{k}{l}=\frac{\frac{D(w,y)}{D(x,y)}\left(\left(\frac{x}{w}\right)^{2h+1}-1\right)}{\frac{D(z,y)}{D(x,y)}\left(\left(\frac{x}{z}\right)^{2h+1}-1\right)}=\frac{D(w,y)\left(\left(\frac{x}{w}\right)^{2h+1}-1\right)}{D(z,y)\left(\left(\frac{x}{z}\right)^{2h+1}-1\right)}\\
		 		=&\frac{w^hy^h(y-w)\left(\left(\frac{x}{w}\right)^{2h+1}-1\right)}{z^hy^h(y-z)\left(\left(\frac{x}{z}\right)^{2h+1}-1\right)}=\frac{w^{h+1}(\frac{y}{w}-1)\left(\left(\frac{x}{w}\right)^{2h+1}-1\right)}{z^{h+1}(\frac{y}{z}-1)\left(\left(\frac{x}{z}\right)^{2h+1}-1\right)}.
		 	\end{aligned}
		 \end{equation}
		 Similarly, by $j^q=j$, we obtain
		 \begin{equation}\label{sec3.1 thm2 equ5}
		 	\begin{aligned}
		 		&-\frac{k}{l}=\frac{\frac{D(x,w)}{D(x,y)}\left(\left(\frac{y}{w}\right)^{2h+1}-1\right)}{\frac{D(x,z)}{D(x,y)}\left(\left(\frac{y}{z}\right)^{2h+1}-1\right)}=\frac{D(x,w)\left(\left(\frac{y}{w}\right)^{2h+1}-1\right)}{D(x,z)\left(\left(\frac{y}{z}\right)^{2h+1}-1\right)}\\
		 		=&\frac{x^hw^h(w-x)\left(\left(\frac{y}{w}\right)^{2h+1}-1\right)}{x^hz^h(z-x)\left(\left(\frac{y}{z}\right)^{2h+1}-1\right)}=\frac{w^{h+1}(\frac{x}{w}-1)\left(\left(\frac{y}{w}\right)^{2h+1}-1\right)}{z^{h+1}(\frac{x}{z}-1)\left(\left(\frac{y}{z}\right)^{2h+1}-1\right)}.
		 	\end{aligned}
		 \end{equation}
		 By combining \cref{sec3.1 thm2 equ4,sec3.1 thm2 equ5}, we have
		 \begin{equation}\label{sec3.1 thm2 equ6}
		 	\frac{(\frac{y}{w}-1)\left(\left(\frac{x}{w}\right)^{2h+1}-1\right)}{(\frac{y}{z}-1)\left(\left(\frac{x}{z}\right)^{2h+1}-1\right)}=\frac{(\frac{x}{w}-1)\left(\left(\frac{y}{w}\right)^{2h+1}-1\right)}{(\frac{x}{z}-1)\left(\left(\frac{y}{z}\right)^{2h+1}-1\right)}.
		 \end{equation}
		 Multiplying both sides of \cref{sec3.1 thm2 equ6} by $\frac{w^{2h+2}}{z^{2h+2}}$ yields
		 \begin{equation*}
		 	\frac{(y-w)\left({x}^{2h+1}-{w}^{2h+1}\right)}{(y-z)\left({x}^{2h+1}-{z}^{2h+1}\right)}=\frac{(x-w)\left({y}^{2h+1}-{w}^{2h+1}\right)}{(x-z)\left({y}^{2h+1}-{z}^{2h+1}\right)}.
		 \end{equation*}
		 This equation is the same as
		 \begin{equation}\label{sec3.1 thm2 equ7}
		 	\frac{\frac{{x}^{2h+1}-{z}^{2h+1}}{x-z}}{\frac{{x}^{2h+1}-{w}^{2h+1}}{x-w}}=\frac{\frac{{y}^{2h+1}-{z}^{2h+1}}{y-z}}{\frac{{y}^{2h+1}-{w}^{2h+1}}{y-w}},
		 \end{equation}
		 i.e. $\frac{E(x,z)}{E(x,w)}=\frac{E(y,z)}{E(y,w)}$.
		 
		 Conversely, assume that $\gcd(2h+1,q+1)=1$, and there exist four pairwise distinct elements $x,y,z,w\in U_{q+1}$ such that $\frac{E(x,z)}{E(x,w)}=\frac{E(y,z)}{E(y,w)}$. Note that \cref{sec3.1 thm2 equ6,sec3.1 thm2 equ7} are equivalent. According to \cref{sec3.1 thm2 equ4,sec3.1 thm2 equ5}, multiplying both sides by $\frac{w^{h+1}}{z^{h+1}}$ yields
		 \begin{equation*}
			\frac{D(w,y)\left(\left(\frac{x}{w}\right)^{2h+1}-1\right)}{D(z,y)\left(\left(\frac{x}{z}\right)^{2h+1}-1\right)}=\frac{D(x,w)\left(\left(\frac{y}{w}\right)^{2h+1}-1\right)}{D(x,z)\left(\left(\frac{y}{z}\right)^{2h+1}-1\right)}.
		 \end{equation*}
		 Since $D(x,y)\ne0$, 
		 \begin{equation}\label{sec3.1 thm2 equ8}
		 	\frac{\frac{D(w,y)}{D(x,y)}\left(\left(\frac{x}{w}\right)^{2h+1}-1\right)}{\frac{D(z,y)}{D(x,y)}\left(\left(\frac{x}{z}\right)^{2h+1}-1\right)}=\frac{\frac{D(x,w)}{D(x,y)}\left(\left(\frac{y}{w}\right)^{2h+1}-1\right)}{\frac{D(x,z)}{D(x,y)}\left(\left(\frac{y}{z}\right)^{2h+1}-1\right)}.
		 \end{equation}
		 Suppose that $l=-1$, and 
		 \begin{equation*}
		 	k=\frac{\frac{D(w,y)}{D(x,y)}\left(\left(\frac{x}{w}\right)^{2h+1}-1\right)}{\frac{D(z,y)}{D(x,y)}\left(\left(\frac{x}{z}\right)^{2h+1}-1\right)}=\frac{D(w,y)\left(\left(\frac{x}{w}\right)^{2h+1}-1\right)}{D(z,y)\left(\left(\frac{x}{z}\right)^{2h+1}-1\right)},
		 \end{equation*}
		 Since $\gcd(2h+1,q+1)=1$ and $x,y,z,w\in U_{q+1}$ are different from each other, we have $k\ne0$. In addition, we have
		 \begin{equation*}
		 	\begin{aligned}
		 		k^q=&\left(\frac{D(w,y)\left(\left(\frac{x}{w}\right)^{2h+1}-1\right)}{D(z,y)\left(\left(\frac{x}{z}\right)^{2h+1}-1\right)}\right)^q\\
		 		=&\left(\frac{-w^{-2h-1}y^{-2h-1}D(w,y)\left(\left(\frac{x}{w}\right)^{-2h-1}-1\right)}{-z^{-2h-1}y^{-2h-1}D(z,y)\left(\left(\frac{x}{z}\right)^{-2h-1}-1\right)}\right)^q\\
		 		=&\left(\frac{-w^{-2h-1}y^{-2h-1}D(w,y)\cdot\left(-\left(\frac{w}{x}\right)^{2h+1}\right)\left(\left(\frac{x}{w}\right)^{2h+1}-1\right)}{-z^{-2h-1}y^{-2h-1}D(z,y)\cdot\left(-\left(\frac{z}{x}\right)^{2h+1}\right)\left(\left(\frac{x}{z}\right)^{2h+1}-1\right)}\right)^q\\
		 		=&k.
		 	\end{aligned}
		 \end{equation*}
		 Thus $k\in\mathbb{F}_q^\ast$. Let
		 \begin{equation*}
		 	i=\frac{kD(z,y)+lD(w,y)}{D(x,y)}\ \mathrm{and}\ j=\frac{kD(x,z)+lD(x,w)}{D(x,y)}.
		 \end{equation*}
		 Next we will show that $i,j\in\mathbb{F}_q^\ast$. Substituting $k$ and $l$ into $i^q-i$ leads to 
		 \begin{equation*}
		 	i^q-i=k\frac{D(z,y)}{D(x,y)}\left(\left(\frac{x}{z}\right)^{2h+1}-1\right)+l\frac{D(w,y)}{D(x,y)}\left(\left(\frac{x}{w}\right)^{2h+1}-1\right)=0,
		 \end{equation*}
		 i.e. $i\in\mathbb{F}_q$. Note that
		 \begin{equation*}
		 	\begin{aligned}
		 		iD(x,y)=&kD(z,y)+lD(w,y)\\
		 		=&\frac{D(w,y)\left(\left(\frac{x}{w}\right)^{2h+1}-1\right)}{D(z,y)\left(\left(\frac{x}{z}\right)^{2h+1}-1\right)}D(z,y)-D(w,y)\\
		 		=&\frac{\left(\left(\frac{x}{w}\right)^{2h+1}-1\right)-\left(\left(\frac{x}{z}\right)^{2h+1}-1\right)}{\left(\frac{x}{z}\right)^{2h+1}-1}D(w,y)\\
		 		=&\frac{\left(\frac{x}{w}\right)^{2h+1}-\left(\frac{x}{z}\right)^{2h+1}}{\left(\frac{x}{z}\right)^{2h+1}-1}D(w,y)\\
		 		\ne&0,
		 	\end{aligned}
		 \end{equation*}
		 the last inequality follows from $\gcd(2h+1,q+1)=1$ and $w\ne z$. Then $i\ne 0$ as $D(x,y)\ne 0$, and so $i\in\mathbb{F}_q^\ast$. By \cref{sec3.1 thm2 equ8}, we also have 
		 \begin{equation*}
		 	k=\frac{\frac{D(x,w)}{D(x,y)}\left(\left(\frac{y}{w}\right)^{2h+1}-1\right)}{\frac{D(x,z)}{D(x,y)}\left(\left(\frac{y}{z}\right)^{2h+1}-1\right)}=\frac{D(x,w)\left(\left(\frac{y}{w}\right)^{2h+1}-1\right)}{D(x,z)\left(\left(\frac{y}{z}\right)^{2h+1}-1\right)}.
		 \end{equation*}
		 Then one can deduce that $j\in\mathbb{F}_q^\ast$ similarly. Applying Cramer's Rule again, there exist $i,j,k,l\in\mathbb{F}_q^\ast$ and four pairwise distinct elements $x,y,z,w\in U_{q+1}$ such that \cref{sec3.1 thm2 equ2} holds. Equivalently, \cref{sec3.1 thm2 equ1} is satisfied. Therefore, $d=4$
		 
		 This completes the proof.
	\end{proof}
	
	Combine \cref{sec3 thm0,sec3.1 thm2}, we present the following corollary that describes the necessary and sufficient conditions for $\mathcal{C}_{(q,q+1,3,h)}$ to be an AMDS code.
	\begin{corollary}\label{sec3 cor4}
		Suppose that $h\notin\{0,\frac{q-1}{2},\frac{q}{2},\frac{q+1}{2},q\}$. Then the BCH code $\mathcal{C}_{(q,q+1,3,h)}$ is an AMDS code with parameters $[q+1,q-3,4]$ if and only if $\gcd(2h+1,q+1)=1$ and there exist four pairwise distinct elements $x,y,z,w\in U_{q+1}$ such that $\frac{E(x,z)}{E(x,w)}=\frac{E(y,z)}{E(y,w)}$.
	\end{corollary}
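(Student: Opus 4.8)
The plan is to assemble this corollary directly from two results already established: the dimension formula of \cref{sec3 thm0} and the minimum-distance characterization of \cref{sec3.1 thm2}. The key conceptual observation is that a code of length $n=q+1$ and dimension $k=q-3$ satisfies the AMDS defining equality $d=n-k$ precisely when $d=4$, since $n-k=(q+1)-(q-3)=4$. Thus "AMDS with parameters $[q+1,q-3,4]$" is, under the right dimension count, equivalent to the single condition $d=4$.

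First I would pin down the dimension. The hypothesis $h\notin\{0,\frac{q-1}{2},\frac{q}{2},\frac{q+1}{2},q\}$ is designed to exclude exactly the exceptional values appearing in \cref{sec3 thm0}: for $q$ even the non-integer values $\frac{q-1}{2},\frac{q+1}{2}$ are vacuous and the hypothesis reduces to $h\notin\{0,\frac{q}{2},q\}$, while for $q$ odd the value $\frac{q}{2}$ is vacuous and it reduces to $h\notin\{0,\frac{q-1}{2},\frac{q+1}{2},q\}$. In both remaining cases \cref{sec3 thm0} yields $k=q-3$. Hence, under the hypothesis, the code has length $q+1$, dimension $q-3$, and is AMDS with parameters $[q+1,q-3,4]$ if and only if its minimum distance is $d=4$.

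It then remains to rewrite "$d=4$" using \cref{sec3.1 thm2}, which asserts that $d=4$ holds if and only if $\gcd(2h+1,q+1)=1$ and there exist four pairwise distinct $x,y,z,w\in U_{q+1}$ with $\frac{E(x,z)}{E(x,w)}=\frac{E(y,z)}{E(y,w)}$. Combining the two equivalences gives both implications at once: if the two displayed conditions hold then \cref{sec3.1 thm2} forces $d=4$, and together with $k=q-3$ this makes the code AMDS with the stated parameters; conversely, if the code is AMDS with parameters $[q+1,q-3,4]$ then in particular $d=4$, so \cref{sec3.1 thm2} returns both conditions.

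Because the argument is a direct synthesis of already-proved theorems, I do not expect a genuine obstacle. The only point needing care is the bookkeeping in the first step, namely verifying that the exclusion set in the hypothesis coincides exactly with the union of the exceptional dimension cases of \cref{sec3 thm0} over both parities, so that $k=q-3$ is guaranteed and the equivalence "AMDS $[q+1,q-3,4]$ $\Longleftrightarrow$ $d=4$" is legitimate before \cref{sec3.1 thm2} is applied.
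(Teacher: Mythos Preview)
Your proposal is correct and follows exactly the approach indicated in the paper, which simply states that the corollary is obtained by combining \cref{sec3 thm0} (to get $k=q-3$ under the hypothesis on $h$) with \cref{sec3.1 thm2} (to characterize $d=4$). Your careful check that the excluded set for $h$ matches the exceptional cases of \cref{sec3 thm0} across both parities is the only nontrivial bookkeeping, and you have handled it correctly.
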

	
	Actually, if $q$ is odd and $\gcd(2h+1,q+1)=1$, the appropriate $x,y,z,w\in U_{q+1}$ in \cref{sec3.1 thm2} always exist.
	\begin{theorem}\label{sec3.1 thm3}
		Suppose that $q$ is odd. Then minimum distance $d$ of the BCH code $\mathcal{C}_{(q,q+1,3,h)}$ is equal to 4 if and only if  $\gcd(2h+1,q+1)=1$.
	\end{theorem}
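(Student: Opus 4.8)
The plan is to reduce the statement to \cref{sec3.1 thm2} and then construct the four required elements by hand. First I would dispose of the ``only if'' direction: if $d=4$ then $d\ne3$, so \cref{sec3 thm1} forces $\gcd(2h+1,q+1)=1$ (this is also immediate from \cref{sec3.1 thm2}). The entire content therefore lies in the converse.

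For the ``if'' direction, assume $q$ is odd and $\gcd(2h+1,q+1)=1$. By \cref{sec3 thm1} we already know $4\le d\le5$, so it suffices to produce four pairwise distinct $x,y,z,w\in U_{q+1}$ satisfying \cref{sec3.1 thm2 equ0}; \cref{sec3.1 thm2} then gives $d=4$. The key observation is that, because $q$ is odd, $-1$ is an element of $U_{q+1}$ distinct from $1$. I would therefore fix $z=1$ and $w=-1$ and study the single-variable function
\begin{equation*}
	F(t)=\frac{E(t,1)}{E(t,-1)}=\frac{(t^{2h+1}-1)(t+1)}{(t^{2h+1}+1)(t-1)},
\end{equation*}
where I used $(-1)^{2h+1}=-1$. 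Since $\gcd(2h+1,q+1)=1$, the map $t\mapsto t^{2h+1}$ is a bijection of $U_{q+1}$, so for $t\in U_{q+1}\setminus\{1,-1\}$ we have $t^{2h+1}\ne1$ and $t^{2h+1}\ne-1$; hence $F(t)$ is well defined and nonzero there.

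The crucial step is to exhibit a symmetry of $F$ that forces a collision. Using that $t^{-1}\in U_{q+1}$, a direct substitution gives $F(t^{-1})=F(t)$: replacing $t$ by $t^{-1}$ and clearing the factors $t^{-(2h+1)}$ and $t^{-1}$ from numerator and denominator returns the original expression, the two sign changes cancelling. Now I pick any $x\in U_{q+1}\setminus\{1,-1\}$, possible since $|U_{q+1}\setminus\{1,-1\}|=q-1\ge2$, and set $y=x^{-1}$. Because $x\ne\pm1$ we have $x\ne x^{-1}$, so $x,y,z=1,w=-1$ are pairwise distinct and $F(x)=F(y)$ is exactly \cref{sec3.1 thm2 equ0}, completing the construction.

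I expect the only genuine obstacle to be guessing the right four-element configuration; once one notices that odd $q$ supplies the extra point $-1$ and that the conjugation map $t\mapsto t^{-1}$ on $U_{q+1}$ fixes $F$ when $\{z,w\}=\{1,-1\}$, the verification collapses to the one-line identity $F(t^{-1})=F(t)$. The restriction to odd $q$ is essential precisely because it is what makes $1$ and $-1$ two distinct elements of $U_{q+1}$; for even $q$ this construction degenerates, matching the fact that the clean equivalence is claimed only for odd $q$.
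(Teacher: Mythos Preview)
Your proposal is correct and follows essentially the same approach as the paper: both reduce to \cref{sec3.1 thm2} and then take the quadruple $(x,y,z,w)=(x,x^{-1},1,-1)$ for some $x\in U_{q+1}\setminus\{1,-1\}$, verifying the identity $E(x^{-1},1)/E(x^{-1},-1)=E(x,1)/E(x,-1)$ via the same sign-cancellation computation. Your write-up is slightly more explicit (introducing the auxiliary function $F$ and checking well-definedness via the bijectivity of $t\mapsto t^{2h+1}$), but the argument is the paper's.
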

	\begin{proof}
		According to \cref{sec3.1 thm2}, we only need to prove that there exist four pairwise distinct elements $x,y,z,w\in U_{q+1}$ such that $\frac{E(x,z)}{E(x,w)}=\frac{E(y,z)}{E(y,w)}$. Since $q$ is odd, then $1,-1\in U_{q+1}$ are different elements and the cardinality of $U_{q+1}$ is at least 4. Suppose that $x\in U_{q+1}\setminus\{1,-1\}$. Then we have $x^{-1}\in U_{q+1}$ and $x,x^{-1},1,-1$ are pairwise distinct. Substituting $(x,x^{-1},1,-1)$ for $(x,y,z,w)$ into $\frac{E(y,z)}{E(y,w)}$ yields
		\begin{equation*}
			\frac{E(y,z)}{E(y,w)}=\frac{E(x^{-1},1)}{E(x^{-1},-1)}=\frac{\frac{x^{-2h-1}-1}{x^{-1}-1}}{\frac{x^{-2h-1}+1}{x^{-1}+1}}=\frac{\frac{x^{2h+1}-1}{x-1}}{\frac{x^{2h+1}+1}{x+1}}=\frac{E(x,1)}{E(x,-1)}=\frac{E(x,z)}{E(x,w)}.
		\end{equation*}
		This proof is completed.
	\end{proof}
	
	When $q$ is odd, the following corollary shows the necessary and sufficient conditions for $\mathcal{C}_{(q,q+1,3,h)}$ to be an AMDS code. Its proof is omitted as it can be derived directly from \cref{sec3 thm0,sec3.1 thm3}.
	\begin{corollary}\label{sec3 cor5}
		Suppose that $q$ is odd and $h\notin\{0,\frac{q-1}{2},\frac{q+1}{2},q\}$. Then the BCH code $\mathcal{C}_{(q,q+1,3,h)}$ is an AMDS code with parameters $[q+1,q-3,4]$ if and only if $\gcd(2h+1,q+1)=1$.
	\end{corollary}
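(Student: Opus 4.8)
The plan is to combine the dimension count of \cref{sec3 thm0} with the minimum-distance characterization of \cref{sec3.1 thm3}, both specialized to $q$ odd. Since the statement is an \emph{if and only if} asserting that the code is AMDS with a prescribed parameter triple, I would first pin down the dimension, then translate the word ``AMDS'' into a single numerical condition on the minimum distance $d$, and finally invoke the distance theorem.

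First I would read off the dimension. Since $q$ is odd and $h\notin\{0,\frac{q-1}{2},\frac{q+1}{2},q\}$, the second case distinction of \cref{sec3 thm0} gives $k=q-3$ directly; note that $\frac{q}{2}$ does not occur in the exceptional set because $q$ is odd, so the hypothesis on $h$ here matches exactly the ``otherwise'' branch of that theorem. Thus the length and dimension are fixed at $n=q+1$ and $k=q-3$, independently of any gcd condition.

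Next I would invoke the definition of an AMDS code, namely that an $[n,k]$ code is AMDS precisely when $d=n-k$. With $n=q+1$ and $k=q-3$ from the previous step, this reads $d=(q+1)-(q-3)=4$. Hence, for the codes under consideration, being an AMDS code with parameters $[q+1,q-3,4]$ is equivalent to the single condition $d=4$.

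Finally I would apply \cref{sec3.1 thm3}, which asserts that for $q$ odd the minimum distance of $\mathcal{C}_{(q,q+1,3,h)}$ equals $4$ if and only if $\gcd(2h+1,q+1)=1$. Chaining the three observations yields the claimed equivalence, while the parameter triple $[q+1,q-3,4]$ comes from the dimension computed in the first step. I do not expect any genuine obstacle: all the real content is carried by \cref{sec3.1 thm3}, whose proof (through \cref{sec3.1 thm2} together with the explicit choice $(x,y,z,w)=(x,x^{-1},1,-1)$) already disposes of the only nontrivial direction, so the corollary is purely a bookkeeping combination of that theorem with the dimension formula.
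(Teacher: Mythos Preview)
Your proposal is correct and matches the paper's own approach: the paper explicitly omits the proof, noting it follows directly from \cref{sec3 thm0} and \cref{sec3.1 thm3}, which is precisely the combination you carry out. Your extra remark that the exceptional value $\frac{q}{2}$ cannot arise for odd $q$ is a helpful sanity check that the hypothesis on $h$ aligns with the ``otherwise'' branch of \cref{sec3 thm0}.
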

	
	\begin{remark}\label{sec3 rmk2}
		(1) For $q$ odd, it is worth noting that $\mathcal{C}_{(q,q+1,3,h)}$ cannot be an MDS code with dimension $q-3$ in accordance with \cref{sec3 thm1,sec3 cor2,sec3 cor5}. Therefore, the condition that $q$ is even is necessary for $\mathcal{C}_{(q,q+1,3,h)}$ to be an MDS code with dimension $q-3$.
		
		(2) For fixed $h$ and $q$, it is easily checked whether $gcd(2h+1,q+1)=1$ holds. Hence one can determine the parameters of $\mathcal{C}_{(q,q+1,3,h)}$ when $q$ is odd. However, for $q$ even, it is not easy to obtain the parameters of $\mathcal{C}_{(q,q+1,3,h)}$.
	\end{remark}
	
	We will now examine the minimum distance $d^\perp$ of the code $\mathcal{C}_{(q,q+1,3,h)}^\perp$ and present a sufficient condition for $\mathcal{C}_{(q,q+1,3,h)}$ to be AMDS.
	\begin{theorem}\label{sec3 thm4}
		Let $m$ be the larger of $\gcd(2h,q+1)$ and $\gcd(2h+2,q+1)$. Suppose that $h\notin\{0,\frac{q-1}{2},\frac{q}{2},\frac{q+1}{2},q\}$. Then $\mathcal{C}_{(q,q+1,3,h)}^\perp$ has parameters $[q+1,4,d^\perp]$ with $q-2h-1 \le d^\perp \le q+1-m$. Furthermore, if $\gcd(2h+1,q+1)=1$ and $m\ge4$, then the BCH code $\mathcal{C}_{(q,q+1,3,h)}$ is an AMDS code with parameters $[q+1,q-3,4]$.
	\end{theorem}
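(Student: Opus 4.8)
The plan is to first identify $\mathcal{C}_{(q,q+1,3,h)}^\perp$ explicitly as a trace code, then read off both the lower and upper bounds on $d^\perp$ from a single root-counting identity, and finally deduce the AMDS property by playing the bound on $d^\perp$ against the Singleton bound through duality. Writing $\mathrm{Tr}=\mathrm{Tr}_{q^2/q}$ and indexing the $q+1$ coordinates by the elements $u\in U_{q+1}$ via $u=\beta^i$, I would show
\[
\mathcal{C}_{(q,q+1,3,h)}^\perp=\left\{\left(\mathrm{Tr}(au^h+bu^{h+1})\right)_{u\in U_{q+1}}:a,b\in\mathbb{F}_{q^2}\right\}.
\]
The right-hand set lies in $\mathcal{C}_{(q,q+1,3,h)}^\perp$ because the parity conditions give $\sum_u c_u u^h=\sum_u c_u u^{h+1}=0$ for $\mathbf c\in\mathcal{C}_{(q,q+1,3,h)}$, so pulling the $\mathbb{F}_q$-scalars $c_u$ through the trace annihilates the inner product. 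Under the hypothesis $h\notin\{0,\frac{q-1}{2},\frac q2,\frac{q+1}{2},q\}$ the exponents $h,h+1,-h,-(h+1)$ are pairwise distinct modulo $q+1$, so the characters $u\mapsto u^h,u^{h+1},u^{-h},u^{-(h+1)}$ of $U_{q+1}$ are $\mathbb{F}_{q^2}$-linearly independent; hence $(a,b)\mapsto(\mathrm{Tr}(au^h+bu^{h+1}))_u$ is injective, the set has $\mathbb{F}_q$-dimension $4$, and equality with $\mathcal{C}_{(q,q+1,3,h)}^\perp$ follows from $\dim\mathcal{C}_{(q,q+1,3,h)}^\perp=4$ in \cref{sec3 cor0}.

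With this description, expanding the trace and clearing $u^{h+1}$ shows that for $(a,b)\neq(0,0)$ a coordinate vanishes precisely when
\[
bu^{2h+2}+au^{2h+1}+a^qu+b^q=0.
\]
For the lower bound, when $b\neq0$ this is a nonzero polynomial of degree $2h+2$, so it has at most $2h+2$ roots in $U_{q+1}$ and the codeword has weight at least $q-2h-1$; when $b=0$ the condition collapses to $u^{2h}=-a^{q-1}$, with at most $\gcd(2h,q+1)\le 2h$ roots, giving strictly larger weight. Thus $d^\perp\ge q-2h-1$. For the upper bound I would specialize to the extreme cases $a=0$ and $b=0$, where the vanishing condition becomes $u^{2h+2}=-b^{q-1}$ and $u^{2h}=-a^{q-1}$ respectively. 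Here the crucial point is that $c\mapsto c^{q-1}$ maps $\mathbb{F}_{q^2}^\ast$ \emph{onto} $U_{q+1}$ (kernel $\mathbb{F}_q^\ast$) and that $-1\in U_{q+1}$, so I may pick $b$ (resp.\ $a$) with $-b^{q-1}=1$ (resp.\ $-a^{q-1}=1$); then $u^{2h+2}=1$ (resp.\ $u^{2h}=1$) has exactly $\gcd(2h+2,q+1)$ (resp.\ $\gcd(2h,q+1)$) solutions in $U_{q+1}$. The resulting nonzero codewords have weights $q+1-\gcd(2h+2,q+1)$ and $q+1-\gcd(2h,q+1)$, whence $d^\perp\le q+1-m$, completing the parameters $[q+1,4,d^\perp]$ with $q-2h-1\le d^\perp\le q+1-m$.

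For the \textbf{furthermore} part, if $m\ge4$ then $d^\perp\le q+1-m\le q-3<q-2$, so $\mathcal{C}_{(q,q+1,3,h)}^\perp$ fails to meet the Singleton bound for a $[q+1,4]$ code and is not MDS; since $\mathcal{C}_{(q,q+1,3,h)}$ is MDS if and only if its dual is, $\mathcal{C}_{(q,q+1,3,h)}$ is not MDS either. By \cref{sec3 thm0} its parameters are $[q+1,q-3,d]$, and $\gcd(2h+1,q+1)=1$ forces $4\le d\le5$ by \cref{sec3 thm1}. An MDS code of this length and dimension would have $d=5$, which is now excluded, so $d=4=(q+1)-(q-3)$ and $\mathcal{C}_{(q,q+1,3,h)}$ is AMDS with parameters $[q+1,q-3,4]$.

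The routine parts are the degree bound and the final duality deduction; the delicate steps I expect to be the main obstacle are the dual-code identification (the linear independence of the four characters) and especially the exact root count in the upper bound, where realizing precisely $m$ zeros rests on the surjectivity of the norm-type map $c\mapsto c^{q-1}$ onto $U_{q+1}$ together with $-1\in U_{q+1}$.
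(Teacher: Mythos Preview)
Your proposal is correct and follows essentially the same route as the paper: Delsarte/trace description of $\mathcal{C}_{(q,q+1,3,h)}^\perp$, the root-counting identity $bu^{2h+2}+au^{2h+1}+a^qu+b^q=0$ for the lower bound, an explicit codeword of weight $q+1-m$ for the upper bound, and the MDS-duality contradiction for the AMDS conclusion. The only cosmetic difference is that the paper exhibits the weight-$(q+1-m)$ codeword by writing down $b=\alpha^{q+1}$ (char.~$2$) or $b=\alpha^{(q+1)/2}$ (odd char.) so that $b^q=\pm b$, whereas you invoke surjectivity of $c\mapsto c^{q-1}$ onto $U_{q+1}$ to solve $-b^{q-1}=1$; both produce the same element and the same codeword.
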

	\begin{proof}
		The dimension of $\mathcal{C}_{(q,q+1,3,h)}^\perp$ is $4$ by \cref{sec3 cor0}. We recall from the proof of \cref{sec3 thm1} that $\alpha$ is a generator of $\mathbb{F}_{q^2}^{\ast}$, $\beta=\alpha^{q-1}$ is a primitive $(q+1)$-th root of unity in $\mathbb{F}_{q^2}$ and
		\begin{equation*}
			\left.H=\left[\begin{array}{lllll}1&\beta^{h}&(\beta^{h})^2&\cdots&(\beta^{h})^q\\1&\beta^{h+1}&(\beta^{h+1})^2&\cdots&(\beta^{h+1})^q\end{array}\right.\right]
		\end{equation*}
		is a parity-check matrix of $\mathcal{C}_{(q,q+1,3,h)}$. Then it follows from Delsarte’s theorem that the trace expression of $\mathcal{C}_{(q,q+1,3,h)}^{\perp}$ is given by
		\begin{equation*}
			\mathcal{C}_{(q,q+1,3,h)}^\perp=\{\mathbf{c}_{(a,b)}:a,b\in\mathbb{F}_{q^2}\},
		\end{equation*}
		where $\mathbf{c}_{(a,b)}=\left(\mathrm{Tr}_{q^{2}/q}\left(a\beta^{hi}+b\beta^{(h+1)i}\right)\right)_{i=0}^{q}.$ Let $u \in U_{q+1}$. Then 
		\begin{equation}
			\label{sec3 thm4 equ1}
			\begin{aligned}
				\operatorname{Tr}_{q^2/q}(au^h+bu^{h+1})&=au^h+bu^{h+1}+a^qu^{-h}+b^qu^{-(h+1)}\\&=u^{-(h+1)}(bu^{2h+2}+au^{2h+1}+a^qu+b^q).
			\end{aligned}
		\end{equation}
		Now consider the number of solutions of the equation
		\begin{equation}
			\label{sec3 thm4 equ2}
			bu^{2h+2}+au^{2h+1}+a^qu+b^q=0.
		\end{equation}
		When $(a,b) \ne (0,0)$, this equation has at most $2h+2$ solutions in $U_{q+1}$, that is, we have
		\begin{equation*}
			wt(\mathbf{c}_{(a,b)})\geq q+1-2h-2=q-2h-1, 
		\end{equation*}
		i.e., $d^\perp \ge q-2h-1$.
		
		If $m=\gcd(2h+2,q+1)$, substituting $(a,b)$ by $(0,\alpha^{q+1})$ if $p$ is even and by $(0,\alpha^{\frac{q+1}{2}})$ if $p$ is odd in \cref{sec3 thm4 equ2} leads to $b(u^{2h+2}-1)=0$. Since $\gcd(2h+2,q+1)=m$, there are $m$ $u \in U_{q+1}$ such that $u^{2h+2}-1=0$. Consequently, the Hamming weight of $\mathbf{c}_{(a,b)}$ is $q+1-m$ when $(a,b) = (0,\alpha^{q+1})$ if $p$ is even and $(a,b) = (0,\alpha^{\frac{q+1}{2}})$ if $p$ is odd. This means that $d^\perp \le q+1-m$.
		
		If $m=\gcd(2h,q+1)$, substituting $(\alpha^{q+1},0)$ if $p$ is even and $(\alpha^{\frac{q+1}{2}},0)$ if $p$ is odd for $(a,b)$ in \cref{sec3 thm4 equ2} yields $au(u^{2h}-1)=0$. Since $\gcd(2h,q+1)=m$, there exist $m$ $u \in U_{q+1}$ such that $u^{2h}-1=0$. Then the Hamming weight of $\mathbf{c}_{(a,b)}$ is $q+1-m$ when $(a,b) = (\alpha^{q+1},0)$ if $p$ is even and $(a,b) = (\alpha^{\frac{q+1}{2}},0)$ if $p$ is odd. Hence, we have $d^\perp \le q+1-m$. As a result, $q-2h-1 \le d^\perp \le q+1-m$.
		
		Next we demonstrate that the minimum distance $d$ of $\mathcal{C}_{(q,q+1,3,h)}$ is 4. It follows from \cref{sec3 thm1} that $4\le d\le5$. We only need to prove that $d\ne5$. If $d = 5$, then both $\mathcal{C}_{(q,q+1,3,h)}$ and $\mathcal{C}_{(q,q+1,3,h)}^\perp$ are MDS. However, we have deduced that $d^\perp \le q+1-m$ in the preceding discussion. Since $m\ge4$, then $d^\perp \le q-3$, and so $\mathcal{C}_{(q,q+1,3,h)}^\perp$ is not MDS. Hence, $d = 4$. This completes the proof.
	\end{proof}
	
	\begin{remark}\label{sec3 rmk3}
		It is easily seen that the condition of $m \ge 4$ in \cref{sec3 thm4} is not necessary for the code $\mathcal{C}_{(q,q+1,3,h)}$ to be an AMDS code from the examples verified by Magma in \cref{sec4 rmk1,sec4 rmk2} in \cref{sec4}.
	\end{remark}
	
	When $q$ is odd and $h=1$, \cref{sec3 cor5,sec3 thm5} immediately lead to the following interesting result.
	\begin{corollary}\label{sec3 cor6}
		Suppose that $q=p^s>4$, where $p$ is odd. If $\gcd(2h+1,q+1)=1$, then $\mathcal{C}_{(q,q+1,3,1)}$ is an NMDS code with parameters $[q+1,q-3,4]$, and its dual code $\mathcal{C}_{(q,q+1,3,1)}^\perp$ has parameters $[q+1,q-3,4]$.
	\end{corollary}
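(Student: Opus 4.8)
The plan is to establish the two conditions defining an NMDS code separately, namely that $\mathcal{C}_{(q,q+1,3,1)}$ is AMDS and that $\mathcal{C}_{(q,q+1,3,1)}^\perp$ is AMDS. The first is a direct specialization of \cref{sec3 cor5}: with $h=1$ the condition $\gcd(2h+1,q+1)=1$ becomes $\gcd(3,q+1)=1$, and since $q=p^s>4$ with $p$ odd we have $q\ge5$, so that $h=1\notin\{0,\frac{q-1}{2},\frac{q+1}{2},q\}$ (the value $1$ equals $\frac{q-1}{2}$ only for $q=3$). Hence \cref{sec3 cor5} applies and gives that $\mathcal{C}_{(q,q+1,3,1)}$ is AMDS with parameters $[q+1,q-3,4]$. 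In particular its minimum distance is $d=4$, whereas the MDS property would require $d=n-k+1=5$; thus $\mathcal{C}_{(q,q+1,3,1)}$ is not MDS.

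For the dual, \cref{sec3 cor0} shows $\dim\mathcal{C}_{(q,q+1,3,1)}^\perp=4$, so it is a $[q+1,4,d^\perp]$ code and I must prove $d^\perp=q-3$. The lower bound is immediate from \cref{sec3 thm4}: the hypotheses hold for $h=1$, and that theorem gives $d^\perp\ge q-2h-1=q-3$. For the matching upper bound I would not try to exhibit a minimum-weight dual codeword directly, but instead invoke the duality of the MDS property recalled in \cref{sec1}: since the dual of an MDS code is again MDS, the code $\mathcal{C}_{(q,q+1,3,1)}^\perp$ is MDS if and only if $\mathcal{C}_{(q,q+1,3,1)}$ is. As the latter is not MDS, neither is the former, so its distance cannot attain the Singleton value $(q+1)-4+1=q-2$; therefore $d^\perp\le q-3$.

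Combining the two bounds gives $d^\perp=q-3=(q+1)-\dim\mathcal{C}_{(q,q+1,3,1)}^\perp$, so $\mathcal{C}_{(q,q+1,3,1)}^\perp$ is AMDS with parameters $[q+1,4,q-3]$. Both the code and its dual being AMDS, $\mathcal{C}_{(q,q+1,3,1)}$ is NMDS by definition. I expect essentially all of the difficulty to be concentrated in the upper bound on $d^\perp$, and the point of the argument is that the MDS-duality shortcut removes it: a frontal approach would instead require producing $(a,b)\in\mathbb{F}_{q^2}^2$ for which the quartic $bu^{4}+au^{3}+a^qu+b^q$ from the trace description in \cref{sec3 thm4} has four distinct roots in $U_{q+1}$, which—upon writing those roots as conjugate pairs $\{s,s^{-1}\}$ and $\{t,t^{-1}\}$—reduces to solving $(s+s^{-1})(t+t^{-1})=-2$ inside $U_{q+1}$, a genuine existence problem that the duality argument lets us avoid.
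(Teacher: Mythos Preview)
Your proof is correct. The paper itself offers only a one-line justification, citing \cref{sec3 cor5} and \cref{sec3 thm5}, so your argument supplies the details that the paper omits. The one point worth noting is that your route to the upper bound $d^\perp\le q-3$ is cleaner than what the paper's cited results literally give: \cref{sec3 thm5} refers back to \cref{sec3 thm4}, whose upper bound $d^\perp\le q+1-m$ only forces $d^\perp\le q-3$ when $m\ge 4$, and for $h=1$ one has $m=\max\{\gcd(2,q+1),\gcd(4,q+1)\}$, which equals $2$ whenever $q\equiv 1\pmod 4$. Your MDS-duality argument (the primal code is already known from \cref{sec3 cor5} to have $d=4<5$, hence is not MDS, hence its dual is not MDS either) sidesteps this case distinction entirely and works uniformly for all odd $q>4$. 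This is exactly the contrapositive of the duality step used in the final paragraph of the proof of \cref{sec3 thm4}, so the underlying idea is the same; you have simply applied it in the more natural direction for this corollary.
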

	\begin{remark}\label{sec3 rmk4}
		The case of $p=3$ in \cref{sec3 cor6} has been presented in \cite{Ding2020InfiniteFamiliesMDS}. In fact, \cref{sec3 cor6} extends the relevant results presented in in \cite{Ding2020InfiniteFamiliesMDS}.
	\end{remark}

	In addition, \cref{sec3 thm5} indicates that the AMDS code $\mathcal{C}_{(q,q+1,3,h)}$ in \cref{sec3 thm4} is a $d$-optimal and $k$-optimal LRC when $q>4h$. 
	\begin{theorem}
		\label{sec3 thm5}
		If $q>2h+2$, the BCH code $\mathcal{C}_{(q,q+1,3,h)}$ in \cref{sec3 thm4} is a $(q+1,q-3,4,q;d^\perp-1)$-LRC and its dual code $\mathcal{C}^\perp_{(q,q+1,3,h)}$ is a $(q+1,4,d^\perp,q;3)$-LRC. Furthermore, if $q>4h$, $\mathcal{C}_{(q,q+1,3,h)}$ is both $d$-optimal and $k$-optimal.
	\end{theorem}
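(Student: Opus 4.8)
The plan is to treat the locality claims and the optimality claims separately, drawing the former from \cref{sec2 lem3} and the latter from the two Singleton-type bounds \cref{sec2 lem1,sec2 lem2}, fed with the parameter information already established in \cref{sec3 thm4}.

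For the LRC structure, recall that \cref{sec3 thm4} guarantees that $\mathcal{C}_{(q,q+1,3,h)}$ is an AMDS $[q+1,q-3,4]$ code and that its dual is a $[q+1,4,d^\perp]$ code with $q-2h-1\le d^\perp\le q+1-m$. The hypothesis $q>2h+2$ forces $d^\perp\ge q-2h-1\ge2$, so both $\mathcal{C}_{(q,q+1,3,h)}$ and $\mathcal{C}^\perp_{(q,q+1,3,h)}$ are nontrivial cyclic codes (the dual of a cyclic code being cyclic). Applying \cref{sec2 lem3} to $\mathcal{C}_{(q,q+1,3,h)}$ gives locality $d(\mathcal{C}^\perp_{(q,q+1,3,h)})-1=d^\perp-1$, and applying it to the dual gives locality $d(\mathcal{C}_{(q,q+1,3,h)})-1=4-1=3$. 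This yields the two LRC parameter tuples exactly as stated.

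For distance-optimality I would substitute $n=q+1$, $k=q-3$, and locality $r=d^\perp-1$ into the Singleton-like bound \cref{sec2.2 equ1}, which reduces to $d\le 6-\lceil (q-3)/(d^\perp-1)\rceil$. The crux is to show that the ceiling equals $2$. The upper estimate $d^\perp\le q+1-m\le q-3$ (using $m\ge4$) gives $d^\perp-1<q-3$, so the ratio exceeds $1$ and the ceiling is at least $2$; and $q>4h$ together with $d^\perp\ge q-2h-1$ gives $d^\perp\ge (q-1)/2$, i.e.\ $q-3\le 2(d^\perp-1)$, so the ratio is at most $2$ and the ceiling is at most $2$. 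Thus the bound reads $d\le4$, which is attained, so $\mathcal{C}_{(q,q+1,3,h)}$ is $d$-optimal. For dimension-optimality I would evaluate the Cadambe-Mazumdar bound \cref{sec2.2 equ2} at $t=1$: since $n-(r+1)=q+1-d^\perp\ge4$, the ordinary Singleton bound gives $k_{opt}^{(q)}(q+1-d^\perp,4)\le q-d^\perp-2$, whence the $t=1$ term is at most $(d^\perp-1)+(q-d^\perp-2)=q-3=k$. Therefore $\min_{t}\{\cdots\}\le k$, and since \cref{sec2 lem2} already gives $k\le\min_{t}\{\cdots\}$, equality holds and the code is $k$-optimal.

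The only genuinely delicate point is the evaluation $\lceil (q-3)/(d^\perp-1)\rceil=2$: it requires using both ends of the interval $q-2h-1\le d^\perp\le q+1-m$ at once, the lower end (through $q>4h$) to cap the ceiling at $2$ and the upper end (through $m\ge4$) to keep it at $2$. Everything else is a direct substitution into the three cited lemmas, so I expect no further difficulty.
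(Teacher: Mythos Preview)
Your proposal is correct and follows essentially the same route as the paper: both derive the localities from \cref{sec2 lem3} after checking nontriviality via $d^\perp\ge q-2h-1\ge2$, both pin $\lceil(q-3)/(d^\perp-1)\rceil=2$ by combining the upper estimate $d^\perp\le q+1-m\le q-3$ (from $m\ge4$) with the lower estimate $d^\perp\ge q-2h-1\ge(q-1)/2$ (from $q>4h$), and both obtain $k$-optimality by evaluating the Cadambe--Mazumdar bound at $t=1$ together with the ordinary Singleton bound on $k_{opt}^{(q)}(q+1-d^\perp,4)$.
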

	\begin{proof}
		With the symbolism of \cref{sec3 thm4}, the minimum distance $d$ of $\mathcal{C}_{(q,q+1,3,h)}$ is 4 and the minimum distance $d^\perp$ of $\mathcal{C}_{(q,q+1,3,h)}^\perp$ satisfies $q-2h-1 \le d^\perp \le q+1-m$. Since $d=4\ge2$ and $d^\perp\ge q-2h-1\ge2h+3-2h-1=2$, both $\mathcal{C}_{(q,q+1,3,h)}$ and its dual are nontrivial. According to \cref{sec2 lem3}, $\mathcal{C}_{(q,q+1,3,h)}$ and $\mathcal{C}^\perp_{(q,q+1,3,h)}$ are LCRs with locality $d^\perp-1$ and 3, respectively. Since $m\ge4$, we have $d^\perp \le q+1-m \le q-3$, and so $1<\frac{q-3}{d^{\perp}-1}$. Note that $\frac{q-3}{d^{\perp}-1}\le2$ is equivalent to $d^\perp\ge\frac{q-1}{2}$. When $q>4h$, we have $d^\perp\ge q-2h-1 \ge\frac{q-1}{2}$. Therefore, $\lceil\frac{q-3}{d^\perp-1}\rceil=2.$ We get from \cref{sec2 lem1} that
		\begin{equation*}
			\begin{aligned}
				&n-k-\left\lceil{\frac{k}{r}}\right\rceil+2 \\
				=&(q+1)-(q-3)-\left\lceil\frac{q-3}{d^\perp-1}\right\rceil+2 \\
				=&4=d,
			\end{aligned}
		\end{equation*}
		i.e. $\mathcal{C}_{(q,q+1,3,4)}$ is $d$-optimal. Moreover, choose $t=1$ in \cref{sec2 lem2}, we have 
		\begin{equation*}
			\begin{aligned}
				&\operatorname*{min}_{t\in\mathbb{Z}_{+}}\left\{tr+k_{opt}^{(q)}(n-t(r+1),d)\right\}\\
				\leq&(d^{\perp}-1)+k_{opt}^{(q)}(q+1-d^{\perp},4) \\
				\leq&d^{\perp}-1+q-2-d^{\perp} \\
				=&q-3\leq k.
			\end{aligned}
		\end{equation*}
		Hence, $\mathcal{C}_{(q,q+1,3,h)}$ is $k$-optimal.
	\end{proof}
	
	\section{Several examples of AMDS codes}\label{sec4}
	
	In this section, we present several specific classes of AMDS codes that are $d$-optimal and $k$-optimal LRCs based on the theorems in \cref{sec3}.
	
	For $p=3$ and $h=4$, it has been presented in \cite{Geng2022ClassAlmostMDS} that $\mathcal{C}_{(3^s,3^s+1,3,4)}$ is AMDS when $s$ is odd. In fact, when $s$ is not only odd but also even, $\mathcal{C}_{(3^s,3^s+1,3,4)}$ with $s\ge3$ is an AMDS code and a $d$-optimal and $k$-optimal LRC according to \cref{sec3 cor4,sec3 thm5}.
	\begin{theorem}\label{sec4 thm1}
		Suppose that $s\ge3$. The BCH code $\mathcal{C}_{(3^s,3^s+1,3,4)}$ is an AMDS code with parameters $[q+1,q-3,4]$ and its dual has parameters $[q+1,4,d^\perp]$ with $q-9\le d^\perp\le q-3$ if $s$ is odd and $d^\perp=q-9$ if $s$ is even. In addition, $\mathcal{C}_{(3^s,3^s+1,3,4)}$ is a $d$-optimal and $k$-optimal LRC and its dual code is an LRC with locality 3.
	\end{theorem}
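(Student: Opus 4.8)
The plan is to specialize the general machinery of \cref{sec3} to $q = 3^s$, $p = 3$, $h = 4$, for which $2h+1 = 9$, $2h = 8$, and $2h+2 = 10$. First I would check the two standing hypotheses. Since $s \ge 3$ forces $q = 3^s \ge 27$, we have $\frac{q-1}{2} \ge 13 > 4$, so $h = 4 \notin \{0, \frac{q-1}{2}, \frac{q+1}{2}, q\}$; and because $3 \nmid 3^s + 1$, we get $\gcd(2h+1, q+1) = \gcd(9, 3^s+1) = 1$. With these two facts, \cref{sec3 cor5} applies verbatim and yields that $\mathcal{C}_{(3^s,3^s+1,3,4)}$ is AMDS with parameters $[q+1, q-3, 4]$, while \cref{sec3 cor0} gives $\dim \mathcal{C}^\perp_{(3^s,3^s+1,3,4)} = 4$.

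For the dual distance I would invoke \cref{sec3 thm4}, whose lower bound gives $d^\perp \ge q - 2h - 1 = q - 9$ for every $s$, and whose upper bound is controlled by $m = \max\bigl(\gcd(8, 3^s+1), \gcd(10, 3^s+1)\bigr)$. The gcd's follow from elementary congruences: since $9 \equiv 1 \pmod 8$, one has $3^s + 1 \equiv 4 \pmod 8$ for $s$ odd and $3^s+1 \equiv 2 \pmod 8$ for $s$ even, so $\gcd(8, 3^s+1)$ is $4$ or $2$ accordingly; and since $3$ has order $4$ modulo $5$, one has $5 \mid 3^s+1$ exactly when $s \equiv 2 \pmod 4$, so $\gcd(10, 3^s+1)$ is $10$ when $s \equiv 2 \pmod 4$ and $2$ otherwise. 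Hence $m = 4$ for $s$ odd, giving $q - 9 \le d^\perp \le q - 3$, and (as $m \ge 4$) re-confirming the AMDS conclusion through \cref{sec3 thm4}.

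For $s$ even the target is the exact value $d^\perp = q - 9$, so only the upper bound $d^\perp \le q - 9$ remains. When $s \equiv 2 \pmod 4$ this is immediate: there $m = \gcd(10, q+1) = 10$, and the explicit codeword built in the proof of \cref{sec3 thm4} (take $a = 0$ and $b = \alpha^{(q+1)/2}$, so the defining equation collapses to $b(u^{10}-1) = 0$) has exactly $10$ zeros in $U_{q+1}$, a weight-$(q-9)$ word. The genuinely delicate case, and what I expect to be the main obstacle, is $s \equiv 0 \pmod 4$: then $\gcd(8, q+1) = \gcd(10, q+1) = 2$, so $m = 2$ and \cref{sec3 thm4} yields only $d^\perp \le q - 1$. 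To reach $q - 9$ one must directly exhibit $(a,b) \ne (0,0)$ for which $b u^{10} + a u^9 + a^q u + b^q$ (equivalently $\mathrm{Tr}_{q^2/q}(a u^4 + b u^5)$) vanishes at exactly $10$ points of $U_{q+1}$, and the subgroup shortcut is unavailable since $10 \nmid q+1$. My strategy would be to exploit two features special to $q = 3^s$ with $s$ even, namely that $u \mapsto u^9$ is a bijection of $U_{q+1}$ because $\gcd(9, q+1) = 1$, and that $-1$ is a square in $\mathbb{F}_q$, in order to force the degree-$10$ polynomial to split completely over $U_{q+1}$; concretely, to realize a $10$-element root set whose elementary symmetric functions $e_2, \dots, e_8$ all vanish while $e_1, e_9, e_{10}$ meet the constraints imposed by the conjugate-reciprocal shape of the coefficients.

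Once $d = 4$ and $d^\perp$ are settled, the locally-repairable-code assertions follow from \cref{sec3 thm5}. Indeed $q = 3^s \ge 27 > 4h = 16$ for $s \ge 3$, so both $q > 2h+2$ and $q > 4h$ hold; feeding $d = 4$, the localities $d^\perp - 1$ for the code and $3$ for its dual (via \cref{sec2 lem3}), and the inequality $d^\perp \ge q - 9 \ge \frac{q-1}{2}$ into the Singleton-like and Cadambe--Mazumdar bounds exactly as in the proof of \cref{sec3 thm5} shows that $\mathcal{C}_{(3^s,3^s+1,3,4)}$ is both $d$-optimal and $k$-optimal, its dual being an LRC of locality $3$. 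For $s \equiv 0 \pmod 4$ I would note that this final step only uses $d^\perp = q - 9 \le q - 3$, so it goes through once the obstacle above is cleared, even though the literal hypothesis $m \ge 4$ of \cref{sec3 thm4} fails there.
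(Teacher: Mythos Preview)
Your overall plan coincides with the paper's: specialize to $h=4$, check $\gcd(9,3^s+1)=1$ and $h\notin\{0,\frac{q-1}{2},\frac{q+1}{2},q\}$, then invoke \cref{sec3 cor5} for the AMDS parameters, \cref{sec3 thm4} for the dual-distance bounds via $m=\max\bigl(\gcd(8,3^s+1),\gcd(10,3^s+1)\bigr)$, and \cref{sec3 thm5} for the LRC optimality from $q>4h=16$. Your gcd computations are correct, and for $s$ odd and $s\equiv 2\pmod 4$ your argument and the paper's are essentially identical.

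Where you diverge is that you have been more careful than the paper in the case $s\equiv 0\pmod 4$. The paper's proof simply asserts that $\gcd(10,3^s+1)=10$ for \emph{all} even $s$; this is false, as your order-of-$3$-mod-$5$ argument shows (and as $s=4$, $3^4+1=82$, already witnesses). Consequently the paper's own proof does not establish the claimed equality $d^\perp=q-9$ when $s\equiv 0\pmod 4$, nor does it verify the hypothesis $m\ge 4$ needed to invoke \cref{sec3 thm4} and \cref{sec3 thm5} there. The gap you flagged is real and is shared by the paper; your proposed strategy of directly exhibiting a weight-$(q-9)$ codeword in $\mathcal{C}^\perp$ is the natural way forward (and your remark that the LRC step only needs $d^\perp\le q-3$, not literally $m\ge 4$, is correct), but neither you nor the paper actually completes this case.
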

	\begin{proof}
		In this case $h=4$, we have $\gcd(9,3^s+1)=1$ and $h=\frac{3^2-1}{2}$. Since $s\ge3$, it is follows from \cref{sec3 cor4} that $\mathcal{C}_{(3^s,3^s+1,3,4)}$ has parameters $[q+1,q-3,4]$. If $s$ is odd, $\gcd(10,3^s+1)=2$ and $\gcd(8,3^s+1)=4$. If $s$ is even, we have $\gcd(8,3^s+1)=2$ and $\gcd(10,3^s+1)=10$. Then \cref{sec3 thm5} immediately yields the remaining part due to $3^s>16$.
	\end{proof}
	
	For the case of odd $p$ and $h=\frac{p-1}{2}$, we display the following theorem.
	
	\begin{theorem}
		\label{sec4 thm2}
		Let $q=p^s$, where $p$ is odd prime and $s \ge 2$. Suppose that $h=\frac{p-1}{2}$. Then $\mathcal{C}_{(q,q+1,3,h)}$ is an AMDS code with parameters $[q+1,q-3,4]$, and its dual code $\mathcal{C}_{(q,q+1,3,h)}^{\perp}$ has parameters $[q+1,4,d^\perp]$, where $d^\perp=q-p$ if $s$ is odd and $q-p\le d^\perp\le q-3$ if $s$ is odd. Simultaneously, $\mathcal{C}_{(q,q+1,3,h)}$ is a $d$-optimal and $k$-optimal LRC and its dual code is an LRC with locality 3.
	\end{theorem}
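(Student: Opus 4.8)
The plan is to specialize the general machinery of \cref{sec3 cor5,sec3 thm4,sec3 thm5} to the parameters $h=\frac{p-1}{2}$, $q=p^s$, so that $2h=p-1$, $2h+1=p$, and $2h+2=p+1$. The first task is to record the three relevant greatest common divisors with $q+1=p^s+1$. Reducing modulo $p$, $p-1$, and $p+1$ respectively and using that $p$ is odd gives $\gcd(2h+1,q+1)=\gcd(p,p^s+1)=1$, $\gcd(2h,q+1)=\gcd(p-1,p^s+1)=2$, and $\gcd(2h+2,q+1)=\gcd(p+1,p^s+1)$, the last being $p+1$ when $s$ is odd (since $p^s\equiv-1\pmod{p+1}$) and $2$ when $s$ is even. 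Since $s\ge2$ forces $h=\frac{p-1}{2}<\frac{p^s-1}{2}=\frac{q-1}{2}$ and $h>0$, the index $h$ avoids $\{0,\frac{q-1}{2},\frac{q+1}{2},q\}$; together with $\gcd(2h+1,q+1)=1$ and $q$ odd, \cref{sec3 cor5} immediately yields that $\mathcal{C}_{(q,q+1,3,h)}$ is an AMDS code with parameters $[q+1,q-3,4]$, in both parities of $s$.

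Next I would pin down $d^\perp$ using the two gcds just computed, setting $m=\max\{\gcd(2h,q+1),\gcd(2h+2,q+1)\}$. When $s$ is odd, $m=p+1\ge4$, so \cref{sec3 thm4} applies verbatim and gives $q-p=q-2h-1\le d^\perp\le q+1-m=q-p$, forcing $d^\perp=q-p$. When $s$ is even, $m=2<4$, so \cref{sec3 thm4} supplies only the lower bound $d^\perp\ge q-2h-1=q-p$; the matching upper bound $d^\perp\le q-3$ I would instead obtain from duality. Indeed, $\mathcal{C}_{(q,q+1,3,h)}$ is AMDS with $d=4<5$, hence not MDS, so $\mathcal{C}_{(q,q+1,3,h)}^\perp$ is not MDS either; as a $[q+1,4,d^\perp]$ code the Singleton bound reads $d^\perp\le q-2$, and non-MDS sharpens this to $d^\perp\le q-3$.

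Finally, for the LRC optimality I would invoke \cref{sec2 lem3} to get locality $d^\perp-1$ for $\mathcal{C}_{(q,q+1,3,h)}$ and locality $3$ for its dual, and then rerun the estimate in the proof of \cref{sec3 thm5}. The two inequalities that drive that proof, namely $d^\perp\le q-3$ and $d^\perp\ge\frac{q-1}{2}$, both hold here: the former was just established (for $s$ odd it follows from $d^\perp=q-p\le q-3$ as $p\ge3$), and the latter follows from $d^\perp\ge q-p$ together with $q>4h=2(p-1)$, which holds because $q=p^s\ge p^2>2p-2$. Hence $\lceil\frac{q-3}{d^\perp-1}\rceil=2$, the Singleton-like bound in \cref{sec2 lem1} is met with value $4=d$, and the Cadambe--Mazumdar bound in \cref{sec2 lem2} with $t=1$ certifies $k$-optimality. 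The main obstacle is precisely the even-$s$ case, where $m=2$ places the parameters outside the literal hypothesis $m\ge4$ of \cref{sec3 thm4,sec3 thm5}; the remedy is to derive the AMDS conclusion directly from \cref{sec3 cor5} and to replace the bound $d^\perp\le q+1-m$ by the sharper duality bound $d^\perp\le q-3$, after which the optimality computation goes through unchanged.
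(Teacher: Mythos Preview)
Your proof is correct and follows essentially the same route as the paper: compute the three gcds $\gcd(2h,q+1)$, $\gcd(2h+1,q+1)$, $\gcd(2h+2,q+1)$, apply \cref{sec3 cor5} for the AMDS claim, get $d^\perp\le q-3$ by duality, and then invoke \cref{sec3 thm4,sec3 thm5} for the dual-distance bounds and LRC optimality. Your treatment is in fact more careful than the paper's in the even-$s$ case, where $m=2$ falls outside the literal hypothesis $m\ge4$ of \cref{sec3 thm5}; you correctly note that the argument in the proof of that theorem still goes through once $d^\perp\le q-3$ is supplied via duality and $q>4h$ is checked, whereas the paper's proof only spells out the odd-$s$ case explicitly.
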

	\begin{proof}
		It is obvious that $\gcd(2h+1,q+1)=\gcd(p,p^s+1)=1$. Since $q$ is odd, it follows from \cref{sec3 cor5} that $\mathcal{C}_{(q,q+1,3,h)}$ is an AMDS code with parameters $[q+1,q-3,4]$. Thus $d^\perp \le q-3$, otherwise $\mathcal{C}_{(q,q+1,3,h)}^{\perp}$ is MDS, which is impossible. Note that
		\begin{equation*}
			p^s+1=\left(p^{s-1}+p^{s-2}+\cdots+p+1\right)(p-1)+2.
		\end{equation*}
		Then we have $\gcd(2h,q+1)=\gcd(p-1,p^s+1)=\gcd(p-1,2)=2$ as $p$ is odd. When $s$ is odd, we have 
		\begin{equation*}
			p^s+1\equiv(-1)^s+1\equiv0\pmod{p+1},
		\end{equation*}
		which implies $\gcd(2h+2,q+1)=\gcd(p+1,p^s+1)=p+1\ge4$. Since $q=p^s>2p-2=4h>p+1=2h+2$ when $s\ge 3$, then one can deduce the theorem from \cref{sec3 thm4,sec3 thm5}.
	\end{proof}
	
	\begin{remark}\label{sec4 rmk1}
		With the same notation as \cref{sec4 thm2}, when $s$ is even, we get $\gcd(2h,q+1)=\gcd(2h+2,q+1)=2$. Hence we cannot accurately determine the minimum distance of $\mathcal{C}_{(q,q+1,3,h)}^\perp$ from \cref{sec3 thm4}. With the help of Magma program, we verify that $d^\perp=q-p$ when $s=2,5\le p \le 61$ and $s=4,p=5,7$. And this prompts us to propose the following conjecture.
	\end{remark}
	
	\begin{conjecture}
		Let $q=p^s$, where $p$ is odd prime and $s \ge 2$. Suppose that $h=\frac{p-1}{2}$. Then BCH code $\mathcal{C}_{(q,q+1,3,h)}$ is an AMDS code with parameters $[q+1,q-3,4]$ and its dual code has parameters $[q+1,4,q-p]$. Simultaneously, $\mathcal{C}_{(q,q+1,3,h)}$ is a $d$-optimal and $k$-optimal LRC with locality $q-p-1$ and its dual code is an LRC with locality 3.
	\end{conjecture}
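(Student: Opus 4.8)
The statement bundles a parameter computation with an LRC-optimality claim, and the single new ingredient beyond \cref{sec4 thm2} is the exact value $d^\perp=q-p$ in the even-$s$ case, where $m=\gcd(2h+2,q+1)=\gcd(2h,q+1)=2$ and \cref{sec3 thm4} no longer pins $d^\perp$ down. The plan is therefore to reduce the whole statement to this one equality and then reuse the existing machinery. First I would record the facts that are insensitive to the parity of $s$. Since $2h+1=p$ and $q+1=p^s+1\equiv1\pmod p$, we have $\gcd(2h+1,q+1)=1$ for every $s$, so \cref{sec3.1 thm3} gives $d=4$ and shows $\mathcal{C}_{(q,q+1,3,h)}$ is AMDS with parameters $[q+1,q-3,4]$. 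The degree bound in the proof of \cref{sec3 thm4}, applied with $2h+2=p+1$, gives $d^\perp\ge q-2h-1=q-p$ regardless of parity, while the non-MDS-ness of $\mathcal{C}_{(q,q+1,3,h)}$ forces $d^\perp\le q-3$. Thus everything reduces to the single upper bound $d^\perp\le q-p$ for $s$ even.

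Next I would convert this bound into an explicit existence problem through the trace description of the dual. By \cref{sec3 thm4 equ1,sec3 thm4 equ2}, $\wt(\mathbf{c}_{(a,b)})=(q+1)-N(a,b)$, where $N(a,b)$ is the number of $u\in U_{q+1}$ with $F_{a,b}(u):=bu^{p+1}+au^p+a^qu+b^q=0$. As $\deg F_{a,b}=p+1$ when $b\ne0$ and $N(a,0)\le\gcd(p-1,q+1)=2$, we always have $N(a,b)\le p+1$, which merely re-proves $d^\perp\ge q-p$; the content of the conjecture is that $N(a,b)=p+1$ is attained for some $b\ne0$. Equivalently I would need to exhibit a quadrinomial $u^{p+1}+c_1u^p+c_pu+c_{p+1}$---the vanishing of the middle elementary symmetric functions $e_2,\dots,e_{p-1}$ of its roots being forced by the shape of $F_{a,b}$---that splits into $p+1$ distinct linear factors all lying on the unit circle $U_{q+1}$. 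For $p=3$ one has $h=1$ and $q-p=q-3$, so the free bound $d^\perp\le q-3$ already yields $d^\perp=q-p$ (recovering \cref{sec3 cor6}); the genuine difficulty is $p\ge5$.

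The construction that settles the odd-$s$ case, namely $a=0$ with root set the full group $\mu_{p+1}$ of $(p+1)$-th roots of unity, is unavailable here: for even $s$ one has $\gcd(p+1,q+1)=2$ and $\mu_{p+1}\subseteq\mathbb{F}_q^\ast$ meets $U_{q+1}$ only in $\{\pm1\}$. The next natural candidate, a polynomial of the form $(u+\lambda)^{p+1}-\mu$, also fails: its roots form a translate $-\lambda+\nu_0\mu_{p+1}$, and demanding that all $p+1\ge4$ of them satisfy the norm-one condition $(-\lambda+\nu_0\zeta)(-\lambda^q+\nu_0^q\zeta)=1$ forces a fixed quadratic in $\zeta\in\mu_{p+1}$ to vanish identically, which is impossible unless $\nu_0=0$. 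I would therefore parametrize $U_{q+1}$ by $\mathbb{F}_q\cup\{\infty\}$ via a Cayley-type map $t\mapsto(t+\theta)/(t-\theta)$ with $\theta^q=-\theta$, transport the splitting condition to a polynomial identity over $\mathbb{F}_q$, and search there for the required configuration; alternatively one could attempt a character-sum count of the pairs $(a,b)$ for which $F_{a,b}$ splits completely on $U_{q+1}$ and show this count is positive. This existence step is exactly where I expect the main obstacle to lie, and it explains why the statement is only a conjecture: the two obvious algebraic families both degenerate for even $s$, and the constraints $e_2=\cdots=e_{p-1}=0$ together with the unit-circle condition heavily over-determine the $p+1$ roots, even though the Magma data in \cref{sec4 rmk1} indicate a valid configuration always exists.

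Finally, once $d^\perp=q-p$ is secured the rest is routine. The dual then has parameters $[q+1,4,q-p]$; by \cref{sec2 lem3} the locality of $\mathcal{C}_{(q,q+1,3,h)}$ is $d^\perp-1=q-p-1$ and that of its dual is $d-1=3$. Because $s\ge2$ forces $q=p^s\ge p^2>4h=2(p-1)$ together with $q-p\ge(q-1)/2$, one checks $\lceil(q-3)/(d^\perp-1)\rceil=2$ and then repeats verbatim the computation in the proof of \cref{sec3 thm5} with $d^\perp=q-p$ to verify that $d$ meets the Singleton-like bound of \cref{sec2 lem1} and $k$ meets the Cadambe--Mazumdar bound of \cref{sec2 lem2}, establishing the claimed $d$-optimality and $k$-optimality.
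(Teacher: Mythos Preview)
The paper does not prove this statement at all: it is stated as a conjecture, motivated only by the Magma computations reported in \cref{sec4 rmk1}. There is no ``paper's own proof'' to compare against. Your proposal is therefore not to be judged as a competing proof but as an analysis of what would be required, and on that score it is accurate. You correctly isolate the single missing ingredient---the upper bound $d^\perp\le q-p$ in the even-$s$ case---and you correctly observe that everything else (the AMDS property via \cref{sec3.1 thm3}, the lower bound $d^\perp\ge q-p$ from the degree argument in \cref{sec3 thm4}, the $p=3$ case collapsing to \cref{sec3 cor6}, and the LRC-optimality via the computation in \cref{sec3 thm5}) follows from results already in the paper. Your diagnosis of why the two natural constructions (the $a=0$ choice and the translate $(u+\lambda)^{p+1}-\mu$) fail for even $s$ is also sound, and it is precisely this failure that leaves the statement open in the paper.

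In short: there is no gap in your reasoning relative to the paper, because the paper itself leaves the same gap. Your proposal is a correct reduction and a reasonable outline of attack, but neither you nor the authors have a proof of the key inequality $d^\perp\le q-p$ for even $s$ and $p\ge5$.
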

	
	For $p=2$, the case of $h=1$ has been presented in \cite{Ding2020InfiniteFamiliesMDS}. For the cases where $h=2,3$, there is no $s$ that satisfies the condition $m\ge4$ in \cref{sec3 thm4}. Hence we will now proceed to examine the case of $p=2$ and $h=4$.
	
	\begin{theorem}
		\label{sec3 cor3}
		Let $q=2^s$ with $s \equiv 2 \pmod 4$ and $s \ge 6$. Then the BCH code $\mathcal{C}_{(q,q+1,3,4)}$ is an AMDS code with parameters $[q+1,q-3,4]$ and a $d$-optimal and $k$-optimal LRC. Its dual code $\mathcal{C}_{(q,q+1,3,4)}^{\perp}$ has parameters $[q+1,4,d^{\perp}]$ with $q-9 \le d^\perp \le q-4$.
	\end{theorem}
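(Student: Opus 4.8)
The plan is to specialize the general machinery of \cref{sec3 thm4,sec3 thm5} to the case $h=4$, thereby reducing everything to two elementary number-theoretic computations: the value of $\gcd(2h+1,q+1)=\gcd(9,2^s+1)$ and the quantity $m=\max\{\gcd(8,2^s+1),\gcd(10,2^s+1)\}$ that controls the dual distance in \cref{sec3 thm4}. First I would record the bookkeeping hypothesis: since $q=2^s\ge 64$ and $h=4$, we have $h\notin\{0,\frac{q-1}{2},\frac{q}{2},\frac{q+1}{2},q\}$, so the earlier results are applicable (for $q$ even the half-integer values are vacuous and $\frac{q}{2}\ge 32\ne 4$).

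Next come the key computations. Because $s$ is even, $2^s\equiv 1\pmod 3$, hence $2^s+1\equiv 2\pmod 3$ and $3\nmid 2^s+1$; therefore $\gcd(9,2^s+1)=1$, which by \cref{sec3 thm1} forces $4\le d\le 5$. For $m$, I would note that $2^s+1$ is odd, so $\gcd(8,2^s+1)=1$ and $\gcd(10,2^s+1)=\gcd(5,2^s+1)$. The multiplicative order of $2$ modulo $5$ is $4$ and $2^2\equiv -1\pmod 5$, so $2^s\equiv -1\pmod 5$ holds precisely when $s\equiv 2\pmod 4$; under the hypothesis this is the case, giving $5\mid 2^s+1$ and hence $m=\gcd(5,2^s+1)=5$.

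With $\gcd(2h+1,q+1)=1$ and $m=5\ge 4$ in hand, \cref{sec3 thm4} immediately yields that $\mathcal{C}_{(q,q+1,3,4)}$ is AMDS with parameters $[q+1,q-3,4]$ and that its dual has parameters $[q+1,4,d^\perp]$ with $q-2h-1\le d^\perp\le q+1-m$, i.e.\ $q-9\le d^\perp\le q-4$. Finally, since $q=2^s\ge 64>16=4h$, the hypotheses of \cref{sec3 thm5} are satisfied, so $\mathcal{C}_{(q,q+1,3,4)}$ is simultaneously $d$-optimal and $k$-optimal and its dual is an LRC of locality $3$, which completes the argument.

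The only genuine content beyond invoking prior theorems is pinning down $\gcd(5,2^s+1)$ under the constraint $s\equiv 2\pmod 4$, and I do not expect a real obstacle here: the congruence $s\equiv 2\pmod 4$ in the hypothesis is exactly what is needed to force $m=5$ (rather than $m=1$, which would occur for other residues of $s$ modulo $4$), while the bound $s\ge 6$ guarantees both $h\notin\{0,\tfrac{q}{2},q\}$ and $q>4h$ so that the optimality conclusions of \cref{sec3 thm5} remain available.
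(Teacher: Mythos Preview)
Your proposal is correct and follows essentially the same approach as the paper: specialize \cref{sec3 thm4,sec3 thm5} to $h=4$, verify $\gcd(9,2^s+1)=1$, $\gcd(8,2^s+1)=1$, $\gcd(10,2^s+1)=5$ under $s\equiv 2\pmod 4$, and note $q>16=4h$ for $s\ge 6$. You supply slightly more detail on the gcd computations and on the bookkeeping hypothesis $h\notin\{0,\tfrac{q}{2},q\}$ than the paper does, but the structure is identical.
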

	\begin{proof}
		Since $\gcd(9,2^s+1)=1,\gcd(8,2^s+1)=1$ and $\gcd(10,2^s+1)=5$ when $s \equiv 2 \pmod 4$, in accordance with \cref{sec3 thm4} we obtain that $\mathcal{C}_{(q,q+1,3,4)}$ and its dual have parameters $[q+1,q-3,4]$ and $[q+1,4,d^{\perp}]$ with $q-9 \le d^\perp \le q-4$, respectively. It is obvious that $\mathcal{C}_{(q,q+1,3,4)}$ is a $d$-optimal and $k$-optimal LRC because $2^s>16$ for $s\ge6$ from \cref{sec3 thm5}. This completes the proof.
	\end{proof}
	
	\begin{remark}\label{sec4 rmk2}
		When $s\ge6$ and $s\equiv0\pmod4$, we have $\gcd(9,2^s+1)=\gcd(8,2^s+1)=\gcd(10,2^s+1)=1$. According to \cref{sec3 thm1}, $\mathcal{C}_{(2^s,2^s+1,3,4)}$ has parameters $[q+1,q-3,d]$ with $4\le d\le 5$. The code $\mathcal{C}_{(2^s,2^s+1,3,4)}$ is AMDS for $s=8,12$ according to Magma experiments. Hence we conjecture that the code $\mathcal{C}_{(2^s,2^s+1,3,4)}$ is AMDS providing that $s\ge 6$ and $s$ is even.
	\end{remark}
	
	We next consider the case where $p=3$ and $h=2$.
	\begin{theorem}
		\label{sec4 thm4}
		Let $q=3^s$, where $s\not\equiv2\pmod4$ and $s \ge 3$. Then the BCH code $\mathcal{C}_{(q,q+1,3,2)}$ is an AMDS code with parameters $[q+1,q-3,4]$, and its dual code $\mathcal{C}_{(q,q+1,3,2)}^{\perp}$ has parameters $[q+1,4,d^{\perp}]$ with $q-5 \le d^\perp \le q-3$. Moreover, $\mathcal{C}_{(q,q+1,3,2)}$ is a $d$-optimal and $k$-optimal LRC.
	\end{theorem}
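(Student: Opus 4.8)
The plan is to reduce everything to the general results of \cref{sec3 cor5,sec3 thm4,sec3 thm5} after computing three greatest common divisors at $h=2$, $q=3^s$. First I would record the relevant gcds: $\gcd(2h+1,q+1)=\gcd(5,3^s+1)$, $\gcd(2h,q+1)=\gcd(4,3^s+1)$, and $\gcd(2h+2,q+1)=\gcd(6,3^s+1)$. Using $3^4\equiv1\pmod5$ and the resulting period-$4$ pattern of $3^s\bmod5$, one checks that $5\mid 3^s+1$ exactly when $s\equiv2\pmod4$; hence the hypothesis $s\not\equiv2\pmod4$ gives $\gcd(5,3^s+1)=1$. Likewise $3\nmid 3^s+1$ together with $3^s+1$ being even forces $\gcd(6,3^s+1)=2$, while $3^s\equiv(-1)^s\pmod4$ gives $\gcd(4,3^s+1)=4$ for $s$ odd and $=2$ for $s$ even.

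Next I would pin down the code itself. Since $q=3^s$ is odd, $h=2\notin\{0,\frac{q-1}{2},\frac{q+1}{2},q\}$ for $s\ge3$, and $\gcd(2h+1,q+1)=1$, \cref{sec3 cor5} immediately yields that $\mathcal{C}_{(q,q+1,3,2)}$ is AMDS with parameters $[q+1,q-3,4]$. In particular $\mathcal{C}_{(q,q+1,3,2)}$ is not MDS, so neither is its dual; since $\mathcal{C}^\perp$ has length $q+1$ and dimension $4$ by \cref{sec3 cor0}, being non-MDS forces $d^\perp<(q+1)-4+1=q-2$, i.e. $d^\perp\le q-3$. The lower bound $d^\perp\ge q-2h-1=q-5$ is exactly the general lower bound established in the proof of \cref{sec3 thm4} (which does not use $m\ge4$). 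Combining these gives $q-5\le d^\perp\le q-3$, as claimed.

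For the locality statements I would invoke \cref{sec3 thm5}: as $q=3^s\ge27>8=4h$, the code is a $d$-optimal and $k$-optimal LRC with locality $d^\perp-1$, and its dual is an LRC with locality $3$. The only facts the optimality argument of \cref{sec3 thm5} actually consumes are $d=4$, the bound $d^\perp\le q-3$, and $q>4h$ (the latter forcing $d^\perp\ge q-2h-1\ge\frac{q-1}{2}$ and hence $\lceil\frac{q-3}{d^\perp-1}\rceil=2$), all of which are verified above.

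The main obstacle is the sub-case $s\equiv0\pmod4$. Here $m=\max(\gcd(4,3^s+1),\gcd(6,3^s+1))=2<4$, so the $m\ge4$ branch of \cref{sec3 thm4} --- and with it the literal hypothesis of \cref{sec3 thm5} --- does not apply, and \cref{sec3 thm4} by itself only yields the weaker estimate $d^\perp\le q+1-m=q-1$. This is precisely why the argument must route the AMDS property through \cref{sec3 cor5} rather than \cref{sec3 thm4}, and must sharpen the upper bound on $d^\perp$ through the observation that the dual of a non-MDS code is non-MDS (the phenomenon already noted in \cref{sec3 rmk3}); once $d^\perp\le q-3$ is secured, the optimality computation of \cref{sec3 thm5} goes through verbatim for both parities of $s$. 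For $s$ odd one has $m=4$, so the dual range and the optimality follow directly and without any detour from \cref{sec3 thm4,sec3 thm5}.
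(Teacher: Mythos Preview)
Your proof is correct and follows the same route as the paper: compute the three gcds at $h=2$, obtain the AMDS parameters from $\gcd(5,3^s+1)=1$ via \cref{sec3 cor5}, and then feed everything into \cref{sec3 thm4,sec3 thm5}. The paper's own proof does exactly this, only more tersely.

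In fact you are more careful than the paper on one point. For $s\equiv0\pmod4$ you correctly note that $m=\max(\gcd(4,3^s+1),\gcd(6,3^s+1))=2$, so the bound in \cref{sec3 thm4} is only $d^\perp\le q-1$ and the literal hypothesis $m\ge4$ of \cref{sec3 thm5} fails. The paper's proof simply writes ``the theorem is shown by \cref{sec3 thm4,sec3 thm5}'' without addressing this, whereas you route the upper bound $d^\perp\le q-3$ through the observation that the dual of a non-MDS code is non-MDS, and then note that the optimality computation in the proof of \cref{sec3 thm5} only uses $d=4$, $d^\perp\le q-3$, and $q>4h$. That patch is exactly right and makes the argument complete for both parities of $s$.
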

	\begin{proof}
		When $s\not\equiv2\pmod4$, it is easily calculated that $\gcd(5,3^s+1)=1$, which implies $\mathcal{C}_{(q,q+1,3,2)}$ is an AMDS code with parameters $[q+1,q-3,4]$. Since $3^s>8>6$ for $s\ge3$, the theorem is shown by \cref{sec3 thm4,sec3 thm5}.
	\end{proof}
	
	The cases of $p=5,h=3$ and $p=7,h=4$ are exhibited in the following two theorems and the proofs are omitted as it is similar to the above.
	
	\begin{theorem}\label{sec4 thm5}
		Suppose that $q=5^s$, $s \not\equiv 3 \pmod 6$ and $s \ge 2$. Then the BCH code $\mathcal{C}_{(q,q+1,3,3)}$ has parameters $[q+1,q-3,4]$, and its dual has parameters $[q+1,4,d^{\perp}]$, where $q-7 \le d^\perp \le q-5$ if $s \equiv 1,5 \pmod 6$ and $q-7 \le d^\perp \le q-3$ if $s$ is even. In addition, $\mathcal{C}_{(q,q+1,3,3)}$ is a $d$-optimal and $k$-optimal LRC.
	\end{theorem}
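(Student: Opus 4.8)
The plan is to reduce everything to the general results of \cref{sec3}, exactly as in the proofs of \cref{sec4 thm2,sec4 thm4}, so that the whole argument rests on a handful of $\gcd$ computations with $q=5^s$ and $h=3$; here $2h+1=7$, $2h=6$, and $2h+2=8$. First I would settle the AMDS claim. The order of $5$ modulo $7$ is $6$ (one checks $5^3\equiv6$ and $5^6\equiv1\pmod 7$), so $5^s\equiv-1\pmod 7$ exactly when $s\equiv3\pmod 6$; under the hypothesis $s\not\equiv3\pmod 6$ this gives $\gcd(7,5^s+1)=1$. Since $q$ is odd, $h=3\notin\{0,\frac{q-1}{2},\frac{q+1}{2},q\}$ for $s\ge2$ (as $q\ge25$), and $\gcd(2h+1,q+1)=1$, \cref{sec3 cor5} immediately yields that $\mathcal{C}_{(q,q+1,3,3)}$ is AMDS with parameters $[q+1,q-3,4]$. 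In particular its dual has dimension $4$, and since the dual of an MDS code is again MDS, $\mathcal{C}_{(q,q+1,3,3)}^\perp$ cannot be MDS, which forces $d^\perp\le q-3$.

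Next I would compute the quantity $m=\max\{\gcd(6,q+1),\gcd(8,q+1)\}$ appearing in \cref{sec3 thm4}. Since $5^s$ is odd, $2\mid 5^s+1$; and since $5\equiv2\pmod 3$ one has $5^s+1\equiv0\pmod 3$ iff $s$ is odd, so $\gcd(6,5^s+1)=6$ when $s$ is odd and $2$ when $s$ is even. Because $5^2\equiv1\pmod 8$, we get $5^s+1\equiv6$ or $2\pmod 8$ according as $s$ is odd or even, whence $\gcd(8,5^s+1)=2$ in all cases. Therefore $m=6$ when $s$ is odd and $m=2$ when $s$ is even.

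For the bounds on $d^\perp$ I would split on the parity of $s$. The lower bound $d^\perp\ge q-2h-1=q-7$ holds in every case from the solution-counting argument in \cref{sec3 thm4}. When $s\equiv1,5\pmod 6$ (the admissible odd values), $m=6\ge4$, so \cref{sec3 thm4} applies in full and gives $d^\perp\le q+1-m=q-5$, i.e. $q-7\le d^\perp\le q-5$. When $s$ is even we have $m=2<4$, so \cref{sec3 thm4} only supplies the lower bound; the upper bound $d^\perp\le q-3$ is instead the AMDS consequence noted above, giving $q-7\le d^\perp\le q-3$.

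Finally, the LRC optimality follows from \cref{sec3 thm5}: the three inputs its proof actually uses are $d=4$, the bracketing $q-2h-1\le d^\perp\le q-3$, and $q>4h$, all of which are now in hand since $q=5^s\ge25>12=4h$. The main subtlety I would flag is precisely the even case: there $m=2<4$, so one cannot invoke the ``furthermore'' clause of \cref{sec3 thm4} to obtain the AMDS property, nor its upper bound $q+1-m$; both the AMDS property and the bound $d^\perp\le q-3$ must instead be routed through \cref{sec3 cor5} and the MDS-duality remark. Once this is observed, every hypothesis of \cref{sec3 thm5} is satisfied, and the $d$-optimality and $k$-optimality drop out, with the dual code being an LRC of locality $3$.
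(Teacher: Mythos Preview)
Your proposal is correct and follows exactly the template the paper intends: the paper omits this proof, stating it is similar to those of \cref{sec4 thm2,sec4 thm4}, and your reduction to \cref{sec3 cor5,sec3 thm4,sec3 thm5} via the $\gcd$ computations with $2h+1=7$, $2h=6$, $2h+2=8$ reproduces that pattern precisely. Your explicit handling of the even-$s$ case (where $m=2<4$, so the ``furthermore'' clause of \cref{sec3 thm4} and the stated hypotheses of \cref{sec3 thm5} are not literally met, and one instead uses \cref{sec3 cor5} together with MDS-duality to obtain $d^\perp\le q-3$ before rerunning the argument of \cref{sec3 thm5}) is in fact more careful than the paper's own treatment of the analogous situation in \cref{sec4 thm2,sec4 thm4}.
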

	
	\begin{theorem}
		\label{sec4 thm6}
		Assume that $q=7^s$ with $s \ge 2$. Then the BCH code $\mathcal{C}_{(q,q+1,3,4)}$ is an AMDS code with parameters $[q+1,q-3,4]$ and a $d$-optimal and $k$-optimal LRC. Its dual code has parameters $[q+1,4,d^{\perp}]$, where $d^{\perp}=q-9$ if $s\equiv2\pmod4$, $q-9 \le d^\perp \le q-7$ if $s$ is odd, and $q-9 \le d^\perp \le q-3$ is $s\equiv0\pmod4$.
	\end{theorem}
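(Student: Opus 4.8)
The plan is to follow the template of \cref{sec4 thm1,sec4 thm4}: reduce everything to the three greatest-common-divisor computations that drive \cref{sec3 cor5,sec3 thm4,sec3 thm5}, feed them into those results, and then handle the one residue class of $s$ where the general machinery degenerates. With $h=4$ the relevant quantities are $\gcd(2h+1,q+1)=\gcd(9,7^s+1)$, $\gcd(2h,q+1)=\gcd(8,7^s+1)$, and $\gcd(2h+2,q+1)=\gcd(10,7^s+1)$, whose values determine both the AMDS property and the integer $m$ appearing in \cref{sec3 thm4}.

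First I would settle the AMDS claim. Since $7\equiv1\pmod3$ we have $7^s+1\equiv2\pmod3$, so $3\nmid 7^s+1$ and hence $\gcd(9,7^s+1)=1$ for every $s$. As $q=7^s$ is odd and $h=4\notin\{0,\frac{q-1}{2},\frac{q+1}{2},q\}$ for $s\ge2$, \cref{sec3 cor5} gives at once that $\mathcal{C}_{(q,q+1,3,4)}$ is AMDS with parameters $[q+1,q-3,4]$. Because such a code is \emph{not} MDS (its Singleton defect is $1$), its dual is not MDS either, so the Singleton bound forces $d^\perp\le q-3$; this observation will be needed in the residual case. The lower bound $d^\perp\ge q-2h-1=q-9$ is supplied directly by \cref{sec3 thm4}.

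Next I would compute $m=\max\{\gcd(8,7^s+1),\gcd(10,7^s+1)\}$ by splitting on $s\bmod4$. Using $7\equiv-1\pmod8$ one gets $\gcd(8,7^s+1)=8$ for $s$ odd and $2$ for $s$ even. For the second gcd, $7^s+1$ is always even, while modulo $5$ the element $7\equiv2$ has order $4$, so $5\mid 7^s+1$ exactly when $s\equiv2\pmod4$; hence $\gcd(10,7^s+1)=10$ if $s\equiv2\pmod4$ and $2$ otherwise. Combining, $m=8$ for $s$ odd, $m=10$ for $s\equiv2\pmod4$, and $m=2$ for $s\equiv0\pmod4$. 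Pitting the upper bound $d^\perp\le q+1-m$ of \cref{sec3 thm4} against $d^\perp\ge q-9$ yields $q-9\le d^\perp\le q-7$ when $s$ is odd and the exact value $d^\perp=q-9$ when $s\equiv2\pmod4$. In the class $s\equiv0\pmod4$ the gcd bound only gives $d^\perp\le q-1$, so here I would replace it by the non-MDS bound $d^\perp\le q-3$ from the previous paragraph, obtaining $q-9\le d^\perp\le q-3$.

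Finally, for the LRC optimality I would invoke \cref{sec3 thm5}: since $q=7^s\ge49>16=4h$ for all $s\ge2$, the locality of $\mathcal{C}_{(q,q+1,3,4)}$ is $d^\perp-1$, and because $q-9\le d^\perp\le q-3$ with $q\ge17$ one checks that $\lceil (q-3)/(d^\perp-1)\rceil=2$, so the Singleton-like bound is met with equality ($d$-optimality) and the choice $t=1$ in the Cadambe--Mazumdar bound gives $k$-optimality. I expect the only genuine obstacle to be the class $s\equiv0\pmod4$: there $m=2<4$, so neither the AMDS conclusion nor the upper bound of \cref{sec3 thm4,sec3 thm5} applies verbatim, and the argument must lean on \cref{sec3 cor5} (valid for every odd $q$) together with the non-MDS bound $d^\perp\le q-3$ rather than on the hypothesis $m\ge4$. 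Once that case is routed through \cref{sec3 cor5}, everything else is straightforward gcd bookkeeping.
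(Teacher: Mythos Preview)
Your proposal is correct and follows precisely the template the paper intends (the paper omits the proof of \cref{sec4 thm6}, deferring to the earlier arguments). The gcd computations, the appeal to \cref{sec3 cor5} for the AMDS property, the use of the first part of \cref{sec3 thm4} for the bounds on $d^\perp$, and the invocation of \cref{sec3 thm5} for LRC optimality are exactly what the analogous proofs of \cref{sec4 thm2,sec4 thm4,sec4 thm5} do. Your handling of the residual class $s\equiv0\pmod4$---where $m=2<4$ so the ``Furthermore'' clause of \cref{sec3 thm4} and hence \cref{sec3 thm5} do not apply verbatim---via the non-MDS bound $d^\perp\le q-3$ is the same maneuver the paper uses in the proof of \cref{sec4 thm2} (``Thus $d^\perp\le q-3$, otherwise $\mathcal{C}_{(q,q+1,3,h)}^{\perp}$ is MDS, which is impossible''), and your direct verification that $\lceil(q-3)/(d^\perp-1)\rceil=2$ still holds in that case is the right way to recover the optimality conclusions.
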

	
	\section{Concluding remarks}\label{sec5}
	
	In this paper, we demonstrate the sufficient and necessary conditions for the minimum distance $d$ of the BCH code $\mathcal{C}_{(q,q+1,3,h)}$ to be 3 and 4, respectively. For $q$ odd, we further simplify the condition for $d$ to be 4, which is given by $\gcd(2h+1, q+1) = 1$. Combine these conditions and the dimension of $\mathcal{C}_{(q,q+1,3,h)}$, infinite families of MDS codes with parameters $[q+1,q-2,4]$ and AMDS codes with parameters $[q+1,q-3,4]$ are provided. Moreover, we show that such AMDS codes are  $d$-optimal and $k$-optimal locally repairable codes when $q>4h$. By applying the theorems in \cref{sec3}, we present several classes of AMDS codes which are $d$-optimal and $k$-optimal LRCs. The parameters of its dual codes $\mathcal{C}_{(q,q+1,3,h)}^\perp$ are also considered.
	
	In fact, when $q$ is odd, we completely determine the parameters of the BCH code $\mathcal{C}_{(q,q+1,3,h)}$ in all cases. When $q$ is even, however, it is not easy to verify whether there exist $x, y, z, w\in U_{q+1}$ such that \cref{sec3.1 thm2 equ0} holds. According to \cref{sec3 rmk2}, we know that $\mathcal{C}_{(q,q+1,3,h)}$ can potentially be an MDS code with parameters $[q+1,q-3,5]$ solely when $q$ is even. Therefore, when $\gcd(2h+1, q+1)=1$ and $h\notin\{0,\frac{q}{2},q\}$, it is worthwhile to ascertain whether the minimum distance of $\mathcal{C}_{(q,q+1,3,h)}$ is 4 or 5, or equivalently, to check whether the code $\mathcal{C}_{(q,q+1,3,h)}$ is AMDS or MDS. In addition, \cref{sec3 thm4} only provides a range for the minimum distance $d^\perp$ of its dual code $\mathcal{C}_{(q,q+1,3,h)}^\perp$. If this range can be narrowed down, or more precisely, $d^\perp$ can be determined, it will be very useful.
	
	\addcontentsline{toc}{section}{References}
	

\begin{thebibliography}{}
		
		\bibitem{Boer1996AlmostMDSCodes} Boer, M.A.D.: Almost MDS codes. Des. Codes Cryptogr. 9, 143–155 (1996). https://doi.org/10.1023/A:101 8014013461
		
		\bibitem{Cadambe2015BoundsSizeLocally} Cadambe, V.R., Mazumdar, A.: Bounds on the size of locally recoverable codes. IEEE Trans. Inf. Theory. 61, 5787–5794 (2015). https://doi.org/10.1109/TIT.2015.2477406
		
		\bibitem{Dodunekov1995NearMDSCodes} Dodunekov, S., Landgev, I.: On near-MDS codes. J. Geom. 54, 30–43 (1995). https://doi.org/10.1007/BF 01222850
		
		\bibitem{Dodunekova1997AlmostMDSNearMDSCodes} Dodunekova, R., Dodunekov, S.M., Klove, T.: Almost-MDS and near-MDS codes for error detection. IEEE Trans. Inf. Theory. 43, 285–290 (1997). https://doi.org/10.1109/18.567708
		
		\bibitem{Ding2014CodesDifferenceSetsa} Ding, C.: Codes from difference sets. WORLD SCIENTIFIC (2014)
		
		\bibitem{Ding2020InfiniteFamiliesMDS} Ding, C., Tang, C.: Infinite families of near MDS codes holding t-designs. IEEE Trans. Inf. Theory. 66, 5419–5428 (2020). https://doi.org/10.1109/TIT.2020.2990396
		
		\bibitem{Fu2023ConstructionSingletonTypeOptimal} Fu, Q., Wang, B., Li, R., Yang, R.: Construction of Singleton-type optimal LRCs from existing LRCs and near-MDS codes. IEICE Trans. Fundam. Electron. Commun. Comput. Sci. E106.A, 1051–1056 (2023). https://doi.org/10.1587/transfun.2022EAP1107
		
		\bibitem{Gopalan2012LocalityCodewordSymbols} Gopalan, P., Huang, C., Simitci, H., Yekhanin, S.: On the locality of codeword symbols. IEEE Trans. Inf. Theory. 58, 6925–6934 (2012). https://doi.org/10.1109/TIT.2012.2208937
		
		\bibitem{Geng2022ClassAlmostMDS} Geng, X., Yang, M., Zhang, J., Zhou, Z.: A class of almost MDS codes. Finite Fields Appl. 79, 101996 (2022). https://doi.org/10.1016/j.ffa.2022.101996
		
		\bibitem{Heng2023MDSCodesDimension} Heng, Z., Li, X.: Near MDS codes with dimension 4 and their application in locally recoverable codes. In: Mesnager, S. and Zhou, Z. (eds.) Arithmetic of Finite Fields. pp. 142–158. Springer International Publishing, Cham (2023)
		
		\bibitem{Heng2023NewInfiniteFamilies} Heng, Z., Wang, X.: New infinite families of near MDS codes holding t-designs. Discrete Math. 346, 113538 (2023). https://doi.org/10.1016/j.disc.2023.113538
		
		\bibitem{Li2023ConstructionsMDSCodes} Li, X., Heng, Z.: Constructions of near MDS codes which are optimal locally recoverable codes. Finite Fields Appl. 88, 102184 (2023). https://doi.org/10.1016/j.ffa.2023.102184
		
		\bibitem{Li2023ConstructionOptimalLocally} Li, X., Heng, Z.: A construction of optimal locally recoverable codes. Cryptogr. Commun. 15, 553–563 (2023). https://doi.org/10.1007/s12095-022-00619-x
		
		\bibitem{Tan2019OptimalCyclicLocally} Tan, P., Zhou, Z., Yan, H., Parampalli, U.: Optimal cyclic locally repairable codes via cyclotomic polynomials. IEEE Commun. Lett. 23, 202–205 (2019). https://doi.org/10.1109/LCOMM.2018.2882849
		
		\bibitem{Tan2023MinimumLocalityLinear} Tan, P., Fan, C., Ding, C., Tang, C., Zhou, Z.: The minimum locality of linear codes. Des. Codes Cryptogr. 91, 83–114 (2023). https://doi.org/10.1007/s10623-022-01099-z
		
		\bibitem{Tang2021InfiniteFamilyLinear} Tang, C., Ding, C.: An infinite family of linear codes supporting 4-designs. IEEE Trans. Inf. Theory. 67, 244–254 (2021). https://doi.org/10.1109/TIT.2020.3032600
		
		\bibitem{Zhou2009SecretSharingScheme} Zhou, Y., Wang, F., Xin, Y., Luo, S., Qing, S., Yang, Y.: A secret sharing scheme based on Near-MDS codes. In: 2009 IEEE International Conference on Network Infrastructure and Digital Content. pp. 833–836 (2009)
		
	\end{thebibliography}

\end{document}